\theoremstyle{plain}
\newtheorem{theorem}{Theorem}
\newtheorem{lemma}[theorem]{Lemma}
\newtheorem{corollary}[theorem]{Corollary}
\newtheorem{conjecture}[theorem]{Conjecture}
\title{Conflict-Free Coloring of Intersection Graphs\footnote{This work was partially supported by the DFG Research Unit "Controlling Concurrent Change", funding number FOR 1800, project FE407/17-2, "Conflict Resolution and Optimization".}}
\author[1]{Sándor P. Fekete}
\author[1]{Phillip Keldenich}
\affil[1]{Algorithms Group, TU Braunschweig, Germany\\
	\texttt{s.fekete,p.keldenich@tu-bs.de}
}
\date{}
\begin{document}
\maketitle
\begin{abstract}
A {\em conflict-free $k$-coloring} of a graph $G=(V,E)$ assigns one of $k$ different colors to some of the vertices such that, 
for every vertex $v$, there is a color that is assigned to exactly one vertex among $v$ and $v$'s neighbors. 
Such colorings have applications in wireless networking, robotics, and geometry, and are well studied in graph theory.
Here we study the conflict-free coloring of geometric intersection graphs.  
We demonstrate that the intersection graph of $n$ geometric objects without fatness properties and
size restrictions may have conflict-free
chromatic number in $\Omega(\log n/\log\log n)$ and in $\Omega(\sqrt{\log n})$ for disks or squares of different sizes; 
it is known for general graphs that the worst case is in $\Theta(\log^2 n)$. 
For unit-disk intersection graphs, we prove that it is NP-complete
to decide the existence of a conflict-free coloring
with one color; we also show that six colors always suffice,
using an algorithm that colors unit disk graphs of restricted height with two colors.  
We conjecture that four colors are sufficient, which we prove for unit squares instead of unit disks.
For interval graphs, we establish a tight worst-case bound of two.
\end{abstract}

\section{Introduction}
Coloring the vertices of a graph is one of the fundamental problems in graph theory, both scientifically and historically.
The notion of proper graph coloring can be generalized to hypergraphs in several ways.
One natural generalization is \emph{conflict-free coloring}, which asks to color the vertices of a hypergraph such that every hyperedge has at least one uniquely colored vertex.
This problem has applications in wireless communication, where ``colors'' correspond to different frequencies.

The notion of conflict-free coloring can be brought back to simple graphs, e.g., by considering as hyperedges the neighborhoods of the vertices of $G$.
The resulting problem arises in certain variants of frequency assignment problems if one is not interested in achieving signal coverage for all points in a region, but only at certain points of interest.
For an illustration, consider a scenario in which one has a given set of nodes in the plane and wants to establish a communication network between them.
Moreover, assume that constructing nodes at new locations is either very expensive or forbidden, and one can only ``upgrade'' any existing node to a wireless base station.

Conflict-free coloring also plays a role in robot navigation, where different beacons are used for providing direction.
To this end, it is vital that in any given location, a robot is adjacent to a beacon with a frequency that is unique among the ones that can be received. 

Both in the frequency assignment setting and in the robot navigation setting, one typically wants to avoid placing unnecessary base stations or beacons.
Abstractly speaking, this corresponds to leaving some vertices uncolored, yielding the following formalization of \emph{conflict-free coloring} of graphs.
For any vertex $v\in V$ of a simple graph $G=(V,E)$, the neighborhood $N[v]$ consists of all vertices adjacent to $v$ and $v$ itself.
A \emph{conflict-free $k$-coloring} of $G$ assigns one of $k$ colors to a (possibly proper) subset $S \subseteq V$ of vertices such that every vertex $v \in V$ has a uniquely colored neighbor.
The \emph{conflict-free chromatic number} $\chi_{CF}(G)$ of $G$ is the smallest $k$ for which a conflict-free coloring exists.
Depending on the situation it may also be more natural to consider \emph{open neighborhood conflict-free coloring}, where each vertex $v$ must have a uniquely colored neighbor in its \emph{open neighborhood} $N(v)$ not including $v$.

Conflict-free coloring has received an increasing amount of attention.
Because of the motivation arising from frequency assignment, it is natural to investigate the conflict-free coloring of intersection
graphs, in particular, of simple shapes such as disks or squares.
In addition, previous work has considered either general graphs and hypergraphs (e.g., see \cite{pt-cfcgh-09})
or other geometric scenarios in the presence of obstacles (e.g., see \cite{hoffmann_et_al:LIPIcs:2015:5097}); we give a more detailed overview further down.
This adds to the relevance of conflict-free coloring of intersection graphs, which lie in the intersection of general graphs and geometry.

There is a spectrum of different scientific challenges when studying conflict-free coloring.
What are worst-case bounds on the necessary number of colors?
When is it NP-hard to determine the existence of a conflict-free $k$-coloring?
We address these questions for the case of intersection graphs.

\medskip
{\bf Our contribution.} We present the following results.
\begin{itemize}
\item We demonstrate that $n$ geometric objects without fatness properties and size restrictions may induce intersection graphs with conflict-free chromatic number
in $\Omega(\log n/\log\log n)$.

\item We prove that non-unit square and disk graphs may require $\Omega(\sqrt{\log n})$ colors. 
Deciding conflict-free $k$-colorability is NP-hard for any $k$ for these graph classes.

\item It is NP-complete for unit-disk intersection graphs to decide the existence of a conflict-free coloring with one color.
The same holds for intersection graphs of unit squares and other shapes.

\item Six colors are always sufficient for conflict-free coloring of unit disks.
This uses an algorithm that colors unit disk graphs contained in a strip of restricted height with two colors.

\item Using a similar argument, we prove that four colors are always sufficient for conflict-free coloring of unit squares. 

\item As a corollary, we get a tight worst-case bound of two on the conflict-free chromatic number of interval graphs.
\end{itemize}

{\bf Related work.}
In the geometric context, motivated by frequency assignment problems, the study of conflict-free coloring of hypergraphs was initiated by Even~et~al.~\cite{elrs-cfcsg-03} and Smorodinsky~\cite{s-cpcg-03}.
For disk intersection hypergraphs, Even et al.~\cite{elrs-cfcsg-03} prove that $\mathcal{O}(\log n)$ colors suffice.
For disk intersection hypergraphs with degree at most $k$, Alon and Smorodinsky~\cite{as-cfcsd-06} show that $\mathcal{O}(\log^3 k)$ colors are sufficient.
Cheilaris et al.~\cite{css-piclcfcgh-11} prove that $\mathcal{O}(\log n)$ colors suffice for the case of hypergraphs induced by planar Jordan regions with linear union complexity.
If every edge of a disk intersection hypergraph must have $k$ distinct unique colors, Horev et al.~\cite{hks-cfcms-10} prove that $\mathcal{O}(k\log n)$ suffice.
Moreover, for unit disks, Lev-Tov and Peleg~\cite{lp-cfcud-09} present an $\mathcal{O}(1)$-approximation algorithm for the conflict-free chromatic number.
Abam et al.~\cite{abp-ftcfc-08} consider the problem of making a conflict-free coloring robust against removal of a certain number of vertices, and prove worst-case bounds for the number of colors required.
The online version of the problem has been studied as well; see, e.g., the work of Chen et al.~\cite{cfk+-ocfci-06} that presents, among other results, a randomized online algorithm using $\mathcal{O}(\log n)$ colors to maintain a conflict-free coloring of a set of intervals, or the work of Bar-Nov et al.~\cite{bco+-ocfch-10} that presents an online algorithm using $\mathcal{O}(k\log n)$ colors for some classes of $k$-degenerate hypergraphs.

The dual problem in which one has to color a given set of points such that each region contains a uniquely colored point has also received some attention.
Har-Peled and Smorodinsky~\cite{hs-cfcpsrp-05} prove that for families of pseudo-disks, every set of points can be colored using $\mathcal{O}(\log n)$ colors.
For rectangles, Ajwani et al.~\cite{aegr-cfcrro-07} show that $\mathcal{O}(n^{0.382})$ colors suffice, whereas Elbassioni and Mustafa~\cite{em-cfcrr-06} show that it is possible to add a sublinear number of points such that sublinearly many colors suffice.
For coloring points on a line with respect to intervals, Cheilaris et al.~\cite{cgrs-scfci-14} present a 2-approximation algorithm, and a $\left(5-\frac{2}{k}\right)$-approximation algorithm when every interval must contain $k$ uniquely colored points.

In general hypergraphs, i.e., hypergraphs not necessarily arising from a geometric context, Ashok et al.~\cite{adk-efmcfch-15} prove that maximizing the number of conflict-freely colored hyperedges is FPT with respect to the number of conflict-freely colored hyperedges in the solution.

Conflict-free coloring also arises in the context of the conflict-free variant of the chromatic Art Gallery Problem, which asks to guard a polygon using colored guards such that each point sees a uniquely colored guard.
Fekete et al. \cite{ffh-cgcagp-2014} prove that computing the chromatic number is NP-hard in this context.
On the positive side, Hoffman et al.~\cite{hoffmann_et_al:LIPIcs:2015:5097} prove $\Theta(\log\log n)$ colors are sometimes necessary and always sufficient for the conflict-free chromatic art gallery problem under rectangular visibility in orthogonal polygons.
For straight-line visibility, Bärtschi et al.~\cite{bs-cfcagc-14} prove that $\mathcal{O}(\log n)$ colors suffice for orthogonal and monotone polygons and $\mathcal{O}(\log^2 n)$ colors suffice for simple polygons.
This is generalized by Bärtschi et al.~\cite{bgm+-ibcfcagp-2014}, who prove that $\mathcal{O}(\log n)$ colors suffice for simple polygons.

There also has been work regarding the scenario where the hypergraph is induced by the neighborhoods of vertices of a simple graph.
Except for the need to color all vertices, this corresponds to the scenario considered in this work.
This does not change the asymptotic number of colors required, since it suffices to insert one additional color to color all vertices that would otherwise remain uncolored.
In this situation, Pach and Tardos~\cite{pt-cfcgh-09} prove that the conflict-free chromatic number of an $n$-vertex graph is in $\mathcal{O}(\log^2 n)$.
Glebov et al.~\cite{gst-cfcg-14} extend this result by proving that almost all $G(n,\omega(1/n))$-graphs have conflict-free chromatic number $\mathcal{O}(\log n)$.
Moreover, they show that the upper bound of Pach and Tardos~\cite{pt-cfcgh-09} is tight by giving a randomized construction for graphs having conflict-free chromatic number $\Theta(\log^2 n)$. 
In more recent work, Gargano and Rescigno~\cite{gr-ccfcg-15} show that finding the conflict-free chromatic number for general graphs is NP-complete, and prove that the problem is FPT w.r.t.~vertex cover or neighborhood~diversity~number.
In our work with Abel et al.~\cite{aad+-tcscfcpg-17}, we consider conflict-free coloring of general and planar graphs and proved a conflict-free variant of Hadwiger's conjecture, which implies that planar graphs have conflict-free chromatic number at most three.
Most recently, Keller and Smorodinsky~\cite{ks-cfciggo-17} consider conflict-free coloring on intersection graphs of geometric objects, in a scenario very similar to ours.
Among other results, they prove that $\mathcal{O}(\log n)$ colors suffices to color a family $\mathcal{F}$ of pseudodisks in a conflict-free manner.
With respect to \emph{open} neighborhoods (also known as \emph{pointed neighborhoods}), they prove that this is tight; for closed neighborhoods as studied in this paper, the tightness of this bound is not proven and remains open.
They also consider the list coloring variant of the problem.

Conflict-free coloring has also been studied for other graph-based hypergraphs.
For instance, Cheilaris and Tóth~\cite{ct-gumcfc-11} consider the case of hypergraphs induced by the paths of a graph.
If the input is the graph, they prove that it is coNP-complete to decide whether a given coloring is conflict-free.

Conflict-free coloring is not the only type of coloring for which unit disk graphs have been found to require a bounded number of colors.
In their recent work, McDiarmid~et~al.~\cite{dmp-ccgg-17} consider clique
coloring of unit disk graphs, in particular with regard to the asymptotic behavior
of the clique chromatic number of random unit disk graphs.  They also prove
that every unit disk graph in the plane can be colored with nine colors, while
three colors are sometimes necessary.
Similar to the present paper, they prove this by cutting the plane into strips of height $\sqrt{3}$; for each of these strips it is then proven that three colors
suffice.

\section{Preliminaries}
In the following, $G = (V,E)$ denotes a graph on $n := |V|$ vertices.
For a vertex $v$, $N(v)$ denotes its \emph{open neighborhood} and $N[v] = N(v) \cup \{v\}$ denotes its \emph{closed neighborhood}.
A \emph{conflict-free $k$-coloring} of a graph $G = (V,E)$ is a coloring $\chi: V' \to \{1,\ldots,k\}$ of a subset $V' \subseteq V$ of the vertices of $G$, such that each vertex $v$ has at least one \emph{conflict-free neighbor} $u \in N[v]$, i.e., a neighbor $u$ whose color $\chi(u)$ occurs only once in $N[v]$.
The \emph{conflict-free chromatic number} $\chi_{CF}(G)$ is the minimum number of colors required for a conflict-free coloring of $G$.

A graph $G$ is a \emph{disk graph} iff $G$ is the intersection graph of disks in the plane.
$G$ is a \emph{unit disk graph} iff $G$ is the intersection graph of disks with fixed radius $r=1$ in the plane,
and a \emph{unit square graph} iff $G$ is the intersection graph of axis-aligned squares with side length 1 in the plane.
A unit disk (square) graph is of \emph{height} $h$ iff $G$ can be modeled by the intersection of unit disks (squares) with center points in $(-\infty,\infty) \times [0,h]$.
In the following, when dealing with intersection graphs, we assume that we are given a geometric model.
In the case of unit disk and unit square graphs, we identify the vertices of the graph with the center points of the corresponding geometric objects in this model.

\section{General Objects}
\label{sec:general-objects}
Intersection graphs of geometric objects can generally contain cliques of arbitrary size, so their chromatic number may be unbounded.
However, cliques do not require a large number of colors in a conflict-free coloring, so it is not immediately clear whether the intersection graphs for a family of geometric objects have bounded conflict-free chromatic number.

If the intersecting objects can be scaled down arbitrarily, i.e., if every representable graph can be represented using arbitrarily small area, we can make use of the following lemma to prove lower bounds on the number of colors required.
\begin{lemma}
	\label{lem:gk_gadgets}
	Let $G_k$ be a graph with $\chi_{CF}(G_k) \geq k$, and let $G$ be a graph containing two disjoint copies $J_k^1$ and $J_k^2$ of $G_k$.
	Let $v_1,\ldots,v_l$ be vertices of $G$, not contained in $J_k^1$ or $J_k^2$, and let each vertex $v_i$ be adjacent to every vertex of $J_k^1$ and $J_k^2$.
	Moreover, let these vertices be the only neighbors of $J_k^1$ and $J_k^2$.
	Then in every conflict-free $k$-coloring of $G$, one of the vertices $v_1,\ldots,v_l$ has a color that appears only once in $v_1,\ldots,v_l$.
\end{lemma}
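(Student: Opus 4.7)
The plan is to argue by contradiction: suppose $\chi$ is a conflict-free $k$-coloring of $G$ in which every color appearing on some $v_i$ is shared with at least one other $v_j$. I will force a contradiction by exploiting the two copies $J_k^1$ and $J_k^2$ in turn.

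First, I would analyze $J_k^1$ in isolation. Fix any $u \in J_k^1$ and let $w_u \in N[u]$ be its conflict-free witness, so that $\chi(w_u)$ is unique in $N[u]$. Since the only edges leaving $J_k^1$ end at the $v_i$'s, we have $N[u] \subseteq J_k^1 \cup \{v_1,\ldots,v_l\}$ and every $v_j$ lies in $N[u]$. If $w_u$ were some $v_i$, then uniqueness of $\chi(v_i)$ in $N[u]$ would immediately make $\chi(v_i)$ unique among $v_1,\ldots,v_l$, contradicting the hypothesis. Hence $w_u \in J_k^1$, and the color $\chi(w_u)$ occurs on no $v_j$. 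Let $W_1 := \{\chi(w_u) : u \in J_k^1\}$. If we erase from $\chi|_{J_k^1}$ every color not in $W_1$, each $u$ still has $w_u$ as a witness, and uniqueness of $\chi(w_u)$ in the larger neighborhood $N_G[u]$ transfers to the smaller $N_{J_k^1}[u]$. This produces a conflict-free coloring of $J_k^1 \cong G_k$ using only colors from $W_1$, so $|W_1| \geq \chi_{CF}(G_k) \geq k$, forcing $W_1 = \{1,\ldots,k\}$. In particular, every color appears on $J_k^1$, and no $v_i$ carries a color.

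The symmetric argument applied to $J_k^2$ shows that every color in $\{1,\ldots,k\}$ also appears on some vertex of $J_k^2$. To finish, pick any $v_i$: it is uncolored, so its conflict-free witness $w' \in N[v_i]$ must be a different, colored vertex with $c := \chi(w')$ unique in $N[v_i]$. But $J_k^1 \cup J_k^2 \subseteq N[v_i]$ and each of the two (disjoint) copies contains a vertex colored $c$, producing two distinct occurrences of $c$ in $N[v_i]$ and contradicting uniqueness.

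The only delicate step is the restriction argument in the second paragraph, where one has to check that erasing colors outside $W_1$ preserves both the existence of each witness $w_u$ and the uniqueness of $\chi(w_u)$ in the smaller neighborhood $N_{J_k^1}[u]$; everything after that is direct bookkeeping using the two copies to block every possible witness for $v_i$.
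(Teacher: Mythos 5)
Your proof is correct and follows essentially the same route as the paper: restrict the coloring to each copy $J_k^i$ to conclude all $k$ colors appear there, then observe every color occurs twice in $N[v_i]$. The extra step of erasing colors outside $W_1$ is unnecessary (the plain restriction of $\chi$ to $V(J_k^1)$ is already a conflict-free $k$-coloring of $J_k^1$, which forces all $k$ colors to appear), but it is harmless.
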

\begin{proof}
	Assume there was a conflict-free $k$-coloring $\chi$ of $G$ such that none of the vertices $v_1, \ldots, v_l$ has a unique color.
	Therefore, each vertex in $J_k^1$ has a conflict-free neighbor in $J_k^1$, and restricting $\chi$ to $V(J_k^1)$ yields a conflict-free $k$-coloring of $J_k^1$.
	As $\chi_{CF}(G_k) \geq k$, each color is used on $V(J_k^1)$ at least once.
	The same holds for $J_k^2$.
	Therefore, each vertex $v_1, \ldots, v_l$ has at least two occurrences of each color in its neighborhood; this contradicts the fact that $\chi$ is a conflict-free coloring of $G$.
\end{proof}

For general objects like freely scalable ellipses or rectangles, it is possible
to model a complete graph $K_n$ of arbitrary size $n$, such that the following
conditions hold: (1) For every object $v$, there is some non-empty area of $v$
not intersecting any other objects.
(2) For every pair of objects $v,w$, there is a non-empty area common to these objects not intersecting any other objects.
This can be seen by choosing $n$ intersecting lines such that no three lines intersect in a common point.
These lines can then be approximated using sufficiently thin objects to achieve the desired configuration.

In this case, the conflict-free chromatic number is unbounded, because we can
inductively build a family $G_n$ of intersection graphs with $\chi_C(G_n) = n$
as follows.
Starting with $G_1 = (\{v\},\emptyset)$ and $G_2 = C_4$ (a four-vertex cycle), we
construct $G_n$ by starting with a $K_n$ modeled according to conditions (1) and (2).
For every object $v$, we place two scaled-down non-intersecting copies of $G_{n-1}$ into an area covered only by $v$; Figure~\ref{fig:unbounded-slim} depicts the construction of $G_5$ for ellipses.
According to Lemma~\ref{lem:gk_gadgets}, these gadgets enforce that every vertex of the underlying $K_n$ is colored.
For every pair of objects $v,w$, we place two scaled-down non-intersecting copies of $G_{n-2}$ into an area covered only by $v$ and $w$.
Using an argument similar to that used in the proof of Lemma~\ref{lem:gk_gadgets}, these gadgets enforce that $v$ and $w$ have to receive different colors.
Thus the resulting graph requires $n$ colors.
\begin{figure}
	\begin{center}
		\resizebox{.55\linewidth}{!}{\includegraphics{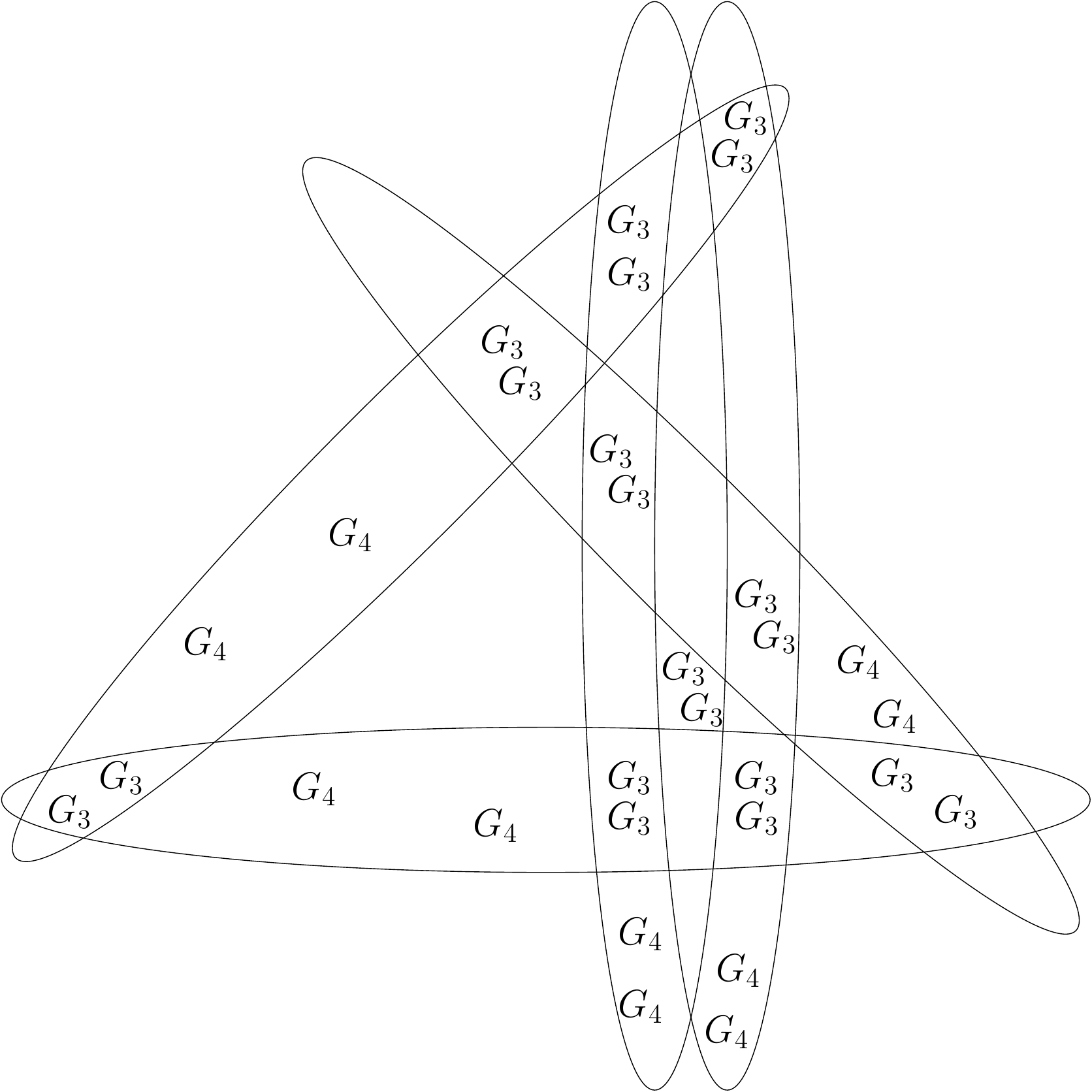}}
	\end{center}
	\caption{The graph $G_5$, shown as an intersection graph of ellipses, requires 5 colors.}
	\label{fig:unbounded-slim}
\end{figure}

The number of vertices used by this construction satisfies the recurrence $$|G_1| = 1,\quad |G_2| = 4,\quad |G_n| = n + 2n|G_{n-1}| + n(n-1)|G_{n-2}|.$$
To estimate the growth of $|G_n|$, let $\bar{G}_n = |G_n|$ for $n \leq 2$ and $\bar{G}_n = 3n\bar{G}_{n-1} + n(n-1)\bar{G}_{n-2}$; clearly, $\bar{G}_n \geq |G_n|$ for all $n$.
The recurrence $\bar{G}_n$ has the closed-form solution
\begin{eqnarray*}
	\bar{G}_n & = & \frac{n!}{13 \cdot 2^{n+1}} \cdot \Big((5\sqrt{13}-13)(3+\sqrt{13})^n - (13+5\sqrt{13})(3-\sqrt{13})^n\Big)\\
		  & = & \mathcal{O}\left(n!\left(\frac{3+\sqrt{13}}{2}\right)^n\right),
\end{eqnarray*}

implying that the number of colors required in geometric intersection graphs on $n$ vertices may be $\Omega(\frac{\log n}{\log \log n}).$

We summarize.

\begin{theorem}
The intersection graph of $n$ convex objects in the plane may have conflict-free chromatic number in $\Omega(\log n/\log\log n)$.
\label{thm:logloglog}
\end{theorem}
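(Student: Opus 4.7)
The plan is to build, for every $n$, an intersection graph $G_n$ of convex objects in the plane with $\chi_{CF}(G_n) \geq n$ and $|G_n| = \mathcal{O}(n!\,\beta^n)$ for some constant $\beta$; inverting this via Stirling then gives $n = \Omega(\log|G_n|/\log\log|G_n|)$, which is the desired bound.

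First I would construct $G_n$ inductively, starting from $G_1$ (a single vertex) and $G_2 = C_4$, whose conflict-free chromatic numbers are $1$ and $2$. For the inductive step, I would realize $K_n$ geometrically by taking $n$ generic lines in the plane (no three concurrent) and approximating them by very thin ellipses. Such a realization has the property that every ellipse $v$ contains a private region not covered by any other ellipse, and every pair $v,w$ of ellipses shares a region covered by exactly $v$ and $w$ and no other ellipse. Into each single-vertex region I place two disjoint, sufficiently scaled-down copies of $G_{n-1}$; into each pairwise region I place two disjoint scaled-down copies of $G_{n-2}$.

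To argue that this enforces $n$ colors, I would apply Lemma \ref{lem:gk_gadgets} twice. Applied to a single vertex $v$ with its attached pair of $G_{n-1}$-copies (so $l=1$), the lemma forces $v$ itself to carry some color. Applied to a pair $v,w$ with its attached pair of $G_{n-2}$-copies (so $l=2$), it forces one of $v,w$ to have a color unique in $\{v,w\}$, i.e., $v$ and $w$ must receive distinct colors. Hence the $n$ vertices of the underlying $K_n$ use $n$ pairwise distinct colors, giving $\chi_{CF}(G_n) \geq n$. Counting vertices yields $|G_n| = n + 2n|G_{n-1}| + n(n-1)|G_{n-2}|$, which is dominated by $\bar G_n := 3n\bar G_{n-1}+n(n-1)\bar G_{n-2}$; standard manipulation (dividing by $n!$ reduces this to a constant-coefficient recurrence) gives $\bar G_n=\mathcal{O}(n!\,\beta^n)$ for an explicit $\beta$.

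The main obstacle I expect is the geometric realization step: I need to guarantee that the thin ellipses approximating generic lines simultaneously (i) realize $K_n$, (ii) expose both single-vertex and pairwise private regions of positive area, and (iii) admit arbitrarily small, pairwise disjoint nested copies of $G_{n-1}$ and $G_{n-2}$ inside those private regions without introducing any spurious intersections with the surrounding construction. Once this geometric lemma is established, solving the linear recurrence and inverting $n!\,\beta^n = N$ via $\log N = \Theta(n\log n)$ to obtain $n = \Omega(\log N/\log\log N)$ is routine.
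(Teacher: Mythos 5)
Your construction is the paper's own, down to the choice of $G_2=C_4$, the thin-ellipse realization of $K_n$ with private single and pairwise regions, the placement of two copies of $G_{n-1}$ (resp.\ $G_{n-2}$) in those regions, the recurrence $|G_n| = n + 2n|G_{n-1}| + n(n-1)|G_{n-2}|$, its majorization by $\bar G_n = 3n\bar G_{n-1}+n(n-1)\bar G_{n-2}$, and the inversion via $\log N = \Theta(n\log n)$.

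One step needs repair: your second use of Lemma~\ref{lem:gk_gadgets}, with $l=2$ applied to the $G_{n-2}$-copies, is not licensed by the lemma as stated. The lemma's hypotheses tie the number of colors of the ambient coloring to the conflict-free chromatic number of the gadget graph (it speaks of conflict-free $k$-colorings where $\chi_{CF}(G_k)\geq k$), so with gadgets $G_{n-2}$ it would only govern $(n-2)$-colorings, whereas you are analyzing a hypothetical $(n-1)$-coloring of $G_n$. The paper accordingly only claims ``an argument similar to'' the lemma's proof here. The correct argument: suppose $v$ and $w$ (both colored, by your $l=1$ step) share color $c$. Then no vertex $u$ in a $G_{n-2}$-copy attached to the pair can use $v$ or $w$ as a conflict-free neighbor, nor can any vertex of color $c$ serve as one (since $c$ already occurs twice in $N[u]$); hence each copy carries an internal conflict-free coloring with unique colors drawn from the $n-2$ colors other than $c$, and since $\chi_{CF}(G_{n-2})\geq n-2$, all of these $n-2$ colors appear in each of the two copies. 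But then every color occurs at least twice in $N[v]$ --- color $c$ on $v$ and $w$, and every other color once in each copy --- so $v$ has no conflict-free neighbor, a contradiction. With that substitution your proof matches the paper's.
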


The best upper bound on the number of colors required in this scenario that we are aware of is $\mathcal{O}(\log^2 n)$, which holds for general graphs and is due to Pach and Tardos~\cite{pt-cfcgh-09}.

\section{Different-Sized Squares and Disks}
	Due to their fatness, squares and disks do not allow us to construct an arbitrarily big clique $K_n$ such that condition (2) of Section~\ref{sec:general-objects} holds.
	However, we can still prove that there is no constant bound on their conflict-free chromatic number.
	The proof is based on Lemma~\ref{lem:gk_gadgets}, which enables us to reduce the conflict-free coloring problem on intersection hypergraphs to our problem.
	\begin{theorem}
		The conflict-free chromatic number of disk intersection graphs and square intersection graphs can be $\Omega(\sqrt{\log n})$.
		\label{thm:disk-graphs-unbounded}
	\end{theorem}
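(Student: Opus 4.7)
The plan is to apply Lemma~\ref{lem:gk_gadgets} recursively to lift a classical $\Omega(\log N)$ lower bound on the conflict-free chromatic number of disk (and square) intersection \emph{hypergraphs} to an $\Omega(\sqrt{\log n})$ bound on their intersection \emph{graphs}; the square-root loss will stem from the blow-up in the recursive gadget substitution. The starting fact is that interval hypergraphs on the line---a special case of both disks and squares---can require $\Omega(\log N)$ colors: the standard halving argument on the unique-color witness inside the ``active'' subinterval yields a fresh forced color at each of $\Omega(\log N)$ levels. For every $k$ this produces an arrangement $\mathcal{A}_k$ of $n_k = 2^{O(k)}$ objects whose intersection hypergraph $\mathcal{H}_k$ requires at least $k$ colors and whose arrangement has $m_k = O(n_k^2)$ non-empty cells.

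Next I would inductively construct disk (or square) intersection graphs $H_k$ with $\chi_{CF}(H_k) \geq k$. Let $H_1$ be a single vertex. Given $H_{k-1}$, build $H_k$ by including all objects of $\mathcal{A}_k$ and, for every non-empty cell $c$, picking an interior witness point $p_c \in c$ and placing two vertex-disjoint copies of $H_{k-1}$ inside a sufficiently small $\varepsilon$-ball around $p_c$. Choosing $\varepsilon$ smaller than the distance from $p_c$ to the boundary of every object of $\mathcal{A}_k$ ensures that each vertex of either copy intersects exactly the objects of $\mathcal{A}_k$ whose interior contains $p_c$, i.e.\ precisely the hyperedge of $\mathcal{H}_k$ associated with $c$, and nothing else. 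Different cells receive their copies inside disjoint neighborhoods, so the copies do not interfere with each other, and the construction uses only the arbitrary scalability of disks and squares applied to the inductive drawing of $H_{k-1}$.

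The lower bound $\chi_{CF}(H_k) \geq k$ then follows by induction. If $H_k$ admitted a conflict-free $(k-1)$-coloring, Lemma~\ref{lem:gk_gadgets}, applied to each cell's pair of $H_{k-1}$-copies, would force the corresponding hyperedge of $\mathcal{H}_k$ to contain a vertex whose color is unique within that hyperedge. Restricting the coloring to $V(\mathcal{A}_k)$ would then give a conflict-free $(k-1)$-coloring of $\mathcal{H}_k$, contradicting $\chi_{CF}(\mathcal{H}_k) \geq k$. Setting $f(k) := |V(H_k)|$, the recurrence $f(k) \leq n_k + 2 m_k f(k-1) \leq 2^{O(k)} f(k-1)$ unfolds to $f(k) \leq 2^{O(k^2)}$, giving $k = \Omega(\sqrt{\log f(k)})$ as desired.

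The main obstacle, I expect, is pinning down the hypergraph lower bound in the precise form needed---realized by an actual disk or square arrangement rather than by an abstract hypergraph---and choosing $\varepsilon$ uniformly across all cells of $\mathcal{A}_k$ so that the inductively drawn $H_{k-1}$ fits without creating unintended adjacencies. Once these are handled, the recursive step is purely combinatorial and the vertex count goes through unchanged.
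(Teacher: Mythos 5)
Your proposal is correct and follows essentially the same route as the paper: the paper also starts from the Even et al.\ chain of $2^{k-1}$ disks (whose cells are exactly the intervals $[i,j]$, giving the $\Omega(\log N)$ hypergraph bound via the same halving argument), places two scaled-down copies of the previous level's graph in each cell to invoke Lemma~\ref{lem:gk_gadgets}, and obtains the identical recurrence $|D_k| = 2^{k-1} + O(2^{2k})|D_{k-1}| = 2^{O(k^2)}$, hence $\Omega(\sqrt{\log n})$.
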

	\begin{proof}
		We begin our proof by inductively constructing, for any number of colors $k$, a disk intersection graph $D_k$ with conflict-free chromatic number $\chi_{CF}(D_k) = k$ and $\mathcal{O}(2^{2k^2})$ vertices.
		The first level of the construction is $D_1$, consisting of an isolated vertex.
		The remainder of the construction is based on a lower-bound example due to Even et al.~\cite{elrs-cfcsg-03}, requiring $\Omega(\log n)$ colors when each point in the union of all disks must lie in a uniquely colored disk.
		This lower-bound example consists of \emph{chain disks} $1, \ldots, 2^{k-1}$ on a horizontal line segment, placed such that all disks overlap in the center.
		For each interval $[i,j]$, $1 \leq i \leq j \leq 2^{k-1}$, there is one region with non-zero area in which exactly the disks from this interval overlap.
		This situation is depicted in Figure~\ref{fig:gk_chains}.
		\begin{figure}
			\begin{center}
				\resizebox{.7\linewidth}{!}{\includegraphics{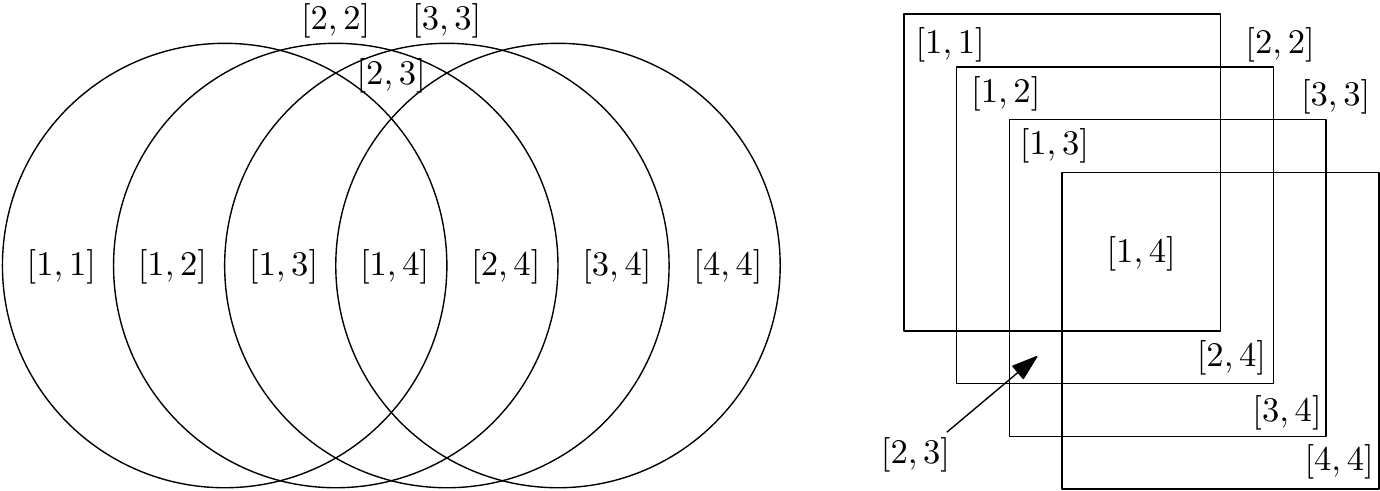}}
			\end{center}
			\caption{A chain of length 4 using either disks or squares, requiring 3 colors in every conflict-free hypergraph coloring. Adding two copies of $D_2$ to every interval yields a disk intersection graph that requires 3 colors in any conflict-free coloring.}
			\label{fig:gk_chains}
		\end{figure}
		
		To construct $D_k$, for each such interval $[i,j]$, we choose one such region and place two scaled-down disjoint copies of $D_{k-1}$ in it.
		We prove that $D_k$ requires $k$ colors by induction on $k$.
		That $D_1$ requires one color is clear.
		Given that $D_{k-1}$ requires $k-1$ colors for some $k \geq 2$, we can prove that $D_k$ requires $k$ colors as follows.
		Assume there was a conflict-free $(k-1)$-coloring $\chi$ of $D_k$.
		Due to Lemma~\ref{lem:gk_gadgets}, we know that, for every interval $[i,j]$, $1 \leq i \leq j \leq 2^{k-1}$, at least one of the chain disks in $[i,j]$ has a unique color.
		We now prove using induction that any color assignment with this property has at least $k$ colors.
		For a chain of length $2^0$, one color is required for the interval $[1,1]$.
		For a chain of length $2^l$, we require one unique color for the interval $[1,2^l]$.
		Let $i$ be the chain disk colored using this color.
		At least one of the intervals $[1,i-1],[i+1,2^l]$ has length at least $2^{l-1}$.
		By induction, this interval requires $l$ colors.
		These colors must all be distinct from the color used for $i$, therefore forcing us to use $l+1$ colors in total.
		This contradicts the fact that $\chi$ uses only $k-1$ colors; therefore, $\chi_{CF}(D_k) \geq k$.
		The number of vertices used by $D_k$ satisfies the recurrence $$|D_1| = 1,\quad |D_k| = 2^{k-1} + \frac{2^{k-1}(2^{k-1}+1)}{2}|D_{k-1}| = 2^{k-1} + (2^{2k-3} + 2^{k-2})|D_{k-1}|,$$
		which is in $\mathcal{O}(2^{2k^2})$.
		All our arguments can also be applied to squares instead of disks.
	\end{proof}

	\begin{theorem}
		For any fixed number of colors $k$, deciding whether a disk (or square) intersection graph is conflict-free $k$-colorable is NP-complete.
	\end{theorem}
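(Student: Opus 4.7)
The plan is to establish NP membership via a standard guess-and-verify argument, and then to prove NP-hardness by reducing from 3-SAT using the gadget $D_k$ from the proof of Theorem~\ref{thm:disk-graphs-unbounded} together with the forcing provided by Lemma~\ref{lem:gk_gadgets}. NP membership is immediate: a nondeterministic algorithm guesses a partial coloring $\chi\colon V' \to \{1,\ldots,k\}$ on some $V' \subseteq V$ and checks in polynomial time that every vertex has a uniquely colored neighbor in its closed neighborhood.

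For NP-hardness, given a 3-CNF formula $\phi$ on variables $x_1,\ldots,x_n$ and clauses $C_1,\ldots,C_m$, I would construct a polynomial-size disk graph $G_\phi$ with $\chi_{CF}(G_\phi) \leq k$ iff $\phi$ is satisfiable. For each variable $x_i$, a variable gadget contains two ``literal disks'' $T_i, F_i$. I would attach copies of $D_k$ to them and apply Lemma~\ref{lem:gk_gadgets} with $l=1$ to force each of $T_i, F_i$ to be colored, and with $l=2$ applied to $\{T_i, F_i\}$ to force them to receive different colors. A common reference substructure, shared across all variable gadgets, would designate one of the two colors as the ``true'' color so that every CF-$k$-coloring of $G_\phi$ induces a consistent truth assignment of $\phi$. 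For each clause $C_j = \ell_{j,1} \lor \ell_{j,2} \lor \ell_{j,3}$, a clause gadget attached to the three corresponding literal disks would be designed to be CF-$k$-colorable iff at least one attached literal disk carries the ``true'' color. The same construction yields the reduction for square intersection graphs.

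Geometric realizability is not an issue, since disks of arbitrary sizes are allowed: each gadget can be scaled down and placed in its own region of the plane, with slim connector disks realizing any long-range adjacencies. The main obstacle will be the design of the clause gadget, which must precisely encode a Boolean OR of three literal colorings via conflict-free constraints. The forward direction (a satisfying assignment yields a CF-$k$-coloring) should then be a direct construction. The hardest step will be the reverse direction: one must verify that the forcing gadgets make the color structure of $G_\phi$ rigid enough that the a priori symmetry among the $k$ colors cannot be exploited to CF-$k$-color a clause gadget when none of its three literal disks actually carries the designated ``true'' color.
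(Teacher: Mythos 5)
Your NP-membership argument is fine, but the hardness part is a plan rather than a proof, and the two components you yourself flag as open --- the clause gadget and the global ``true color'' reference --- are precisely the parts that carry all the difficulty, and neither is constructed. The color-symmetry problem is not a technicality you can defer: any permutation of the $k$ colors maps a conflict-free coloring to another one, so a ``designated true color'' can only be defined relative to some reference vertex, and every clause gadget must then compare its three literal disks against that same reference. In an intersection graph this forces adjacencies between the reference structure and gadgets spread all over the plane, and more importantly you have given no mechanism by which a conflict-free constraint can test ``vertex $u$ has the same color as vertex $r$'' --- conflict-free conditions speak about \emph{uniqueness} of a color in a neighborhood, not about equality of colors of two designated vertices. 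Without an explicit gadget realizing that test (and an OR of three such tests), the reverse direction cannot be verified, and it is not at all clear such a gadget exists. So as it stands the proposal has a genuine gap.

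It is worth contrasting this with the paper's route, which sidesteps both obstacles. The paper proves hardness by induction on $k$: it reduces conflict-free $(k-1)$-colorability of a disk graph $G$ to conflict-free $k$-colorability of a graph $H$, anchoring the induction at the $k=1$ case (Theorem~\ref{thm:npc1}, a reduction from {\sc Positive Planar 1-in-3-SAT}). The graph $H$ consists of a chain of length $2^k$ --- which requires $k+1$ colors as a hypergraph by the counting argument of Theorem~\ref{thm:disk-graphs-unbounded} --- with copies of $G$ placed into the overlap regions, so that Lemma~\ref{lem:gk_gadgets} forces all chain vertices to be colored whenever $G$ is not $(k-1)$-colorable; the pigeonhole on the chain then leaves some copy of $G$ with one color blocked in every neighborhood, forcing a $(k-1)$-coloring of $G$. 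No truth assignment, no clause gadget, and no color-symmetry breaking are needed. If you want to salvage your direct 3-SAT approach, you would have to exhibit the equality-testing and OR gadgets explicitly; the inductive reduction is the cleaner path.
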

	\begin{proof}
		Conflict-free coloring of disk intersection graphs is clearly in NP.
		We prove NP-hardness inductively by reducing conflict-free $(k-1)$-colorability to conflict-free $k$-colorability.
		For $k = 1$, NP-hardness follows from Theorem~\ref{thm:npc1}.
		In order to reduce $(k-1)$-colorability to $k$-colorability, consider a graph $G$ for which conflict-free $(k-1)$-colorability is to be decided. We construct a graph $H$ that is conflict-free $k$-colorable iff $G$ is conflict-free $(k-1)$-colorable.
		For the sake of simplicity, we will refer to the disk representation of $G$ and $H$ in the description of our construction; however, the construction does not require this representation unless a disk representation of $H$ is desired.
		We start construction of $H$ with a chain $H_1$ of length $2^k$.
		For every interval $[i,j], 1 \leq i \leq j \leq 2^k$, the hypergraph $H_1$ contains a corresponding hyperedge.
		By the argument used in the proof of Theorem~\ref{thm:disk-graphs-unbounded}, in a conflict-free hypergraph coloring of $H_1$, $k+1$ colors are required.
		For each $1 \leq i \leq 2^k$, we choose a region corresponding to the interval $[i,i]$ and place two disjoint copies of $G$ inside.
		For all other intervals $[i,j], i \neq j$, we choose a region corresponding to the interval and place one copy of $G$ inside; see Figure~\ref{fig:hardness-k-colors}.
		\begin{figure}
			\begin{center}
				\resizebox{.7\linewidth}{!}{\includegraphics{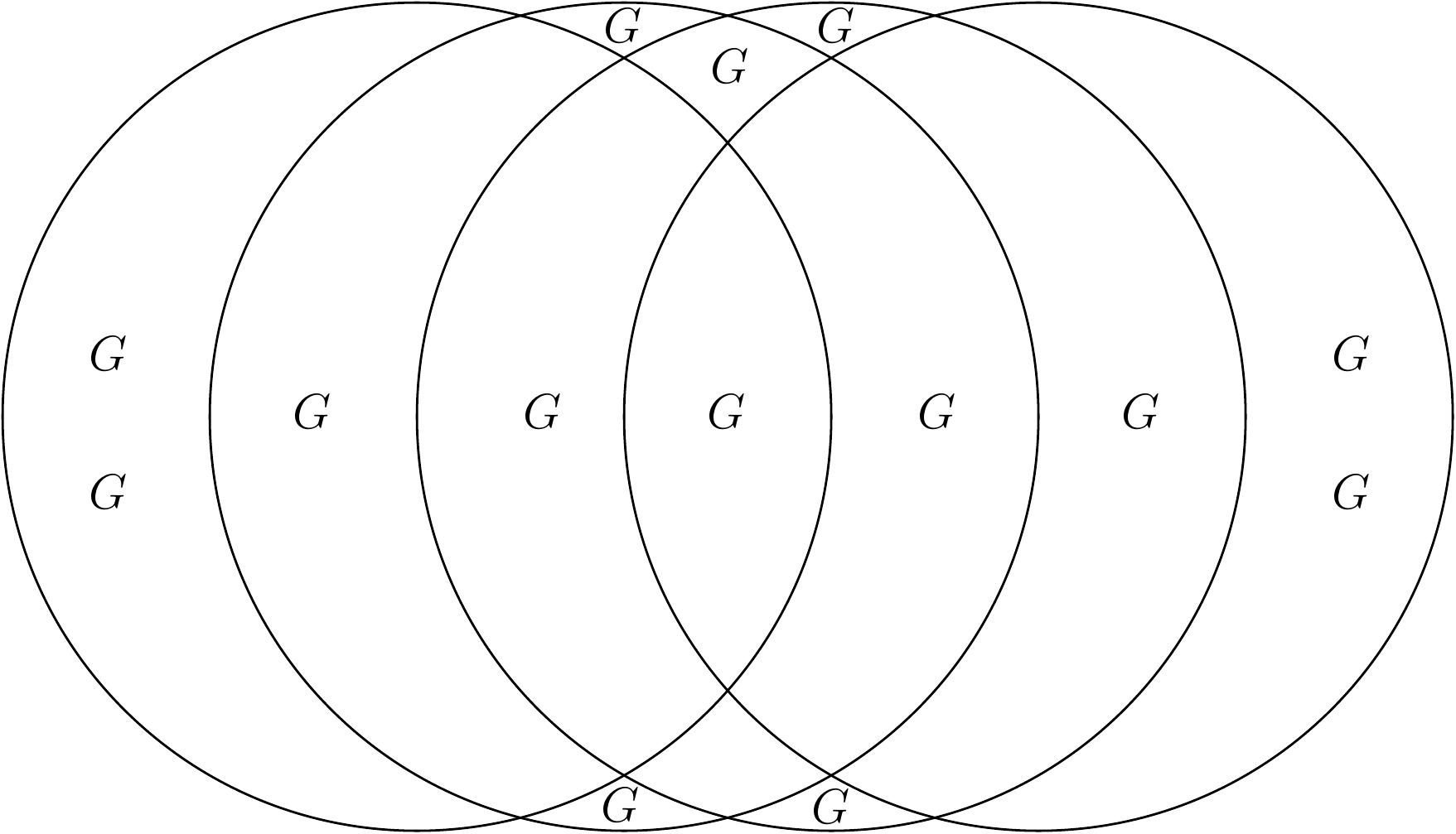}}
			\end{center}
			\caption{
				The setup of the reduction for $k=2$ colors. If $G$ is conflict-free $1$-colorable, every copy of $G$ can be colored using just color $2$.
				Every chain disk can then receive a conflict-free neighbor for instance by coloring one of the chain disks with color $1$.
			}
			\label{fig:hardness-k-colors}
		\end{figure}

		If $G$ is conflict-free $(k-1)$-colorable, $H$ is conflict-free $k$-colorable as follows.
		Each copy of $G$ can be colored using colors $\{2,\ldots,k\}$ in a conflict-free manner.
		In order to give each chain vertex a conflict-free neighbor, in one of the two copies of $G$ in interval $[i,i]$, we can color an arbitrary vertex of $G$ with color $1$.

		If $G$ is not conflict-free $(k-1)$-colorable, by Lemma~\ref{lem:gk_gadgets} we have to color all chain vertices.
		Given such a coloring of the chain vertices, there are two possibilities for each interval $[i,j], 1 \leq i < j \leq 2^k$.
		(1)~The chain vertices corresponding to the interval are colored such that there is a uniquely colored vertex.
		(2)~Every color occurring on the corresponding chain vertices occurs more than once.

		As $H_1$ does not have a conflict-free hypergraph $k$-coloring, case (2) must occur for at least one interval $[i,j]$.
		Let $G'$ be the copy of $G$ placed in $[i,j]$.
		Vertices in $G'$ cannot use the chain vertices as conflict-free neighbors.
		Moreover, at least one color already appears twice in the neighborhood of every vertex of $G'$ and thus cannot be the color of a conflict-free neighbor of any vertex in $G'$.
		Therefore a conflict-free $k$-coloring of $H$ would yield a conflict-free $(k-1)$-coloring of $G$, which is a contradiction.

		For fixed $k$, this construction works in polynomial time.
		Therefore, conflict-free $k$-coloring of disk intersection graphs is NP-complete for any fixed $k \geq 1$.
		All our arguments can also be applied to squares instead of disks.
	\end{proof}
	
	In \cite{elrs-cfcsg-03}, Even et al. prove that $\Theta(\log n)$ colors are always sufficient and sometimes necessary to color a disk intersection hypergraph in a conflict-free manner.
	This implies that $\mathcal{O}(\log n)$ colors are sufficient in our case, leaving a gap of $\mathcal{O}(\sqrt{\log n})$.

\section{Unit-Disk Graphs}
The construction used in the previous section hinges on high aspect ratios of the intersecting shapes.
In the setting of frequency assignment for radio transmitters, it is natural to only consider {\em fat} objects with bounded aspect ratio, such as unit disks and unit squares.
As it turns out, their intersection graphs have conflict-free chromatic number bounded by a small constant; on the other hand, even deciding the existence of a conflict-free coloring with a single color is NP-complete.

\subsection{Complexity: One Color}
\label{subsec:one}
While it is trivial to decide whether a graph has a regular chromatic number of 1 and straightforward
to check a chromatic number of 2, it is already NP-complete to decide whether a conflict-free coloring
with a single color exists, even for unit-disk intersection graphs. This is a refinement
of Theorem~4.1 in Abel et al.~\cite{aad+-tcscfcpg-17}, which shows the same results for general planar graphs.

\begin{theorem}
It is NP-complete to decide whether a unit-disk intersection graph $G=(V,E)$ has a conflict-free coloring with one color.
\label{thm:npc1}
\end{theorem}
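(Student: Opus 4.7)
My plan is first to observe that conflict-free 1-coloring is in NP: a certificate is the colored subset $S\subseteq V$, and one verifies in linear time that $|S\cap N[v]|=1$ for every $v$ (equivalently, $S$ is an \emph{efficient dominating set}). For hardness, I would reduce from conflict-free 1-coloring on planar graphs, which is NP-hard by Theorem~4.1 of Abel et al.~\cite{aad+-tcscfcpg-17}. The goal is therefore to take an arbitrary planar instance $G=(V,E)$ and to construct, in polynomial time, a unit-disk graph $H$ with $\chi_{CF}(H)\le 1\iff \chi_{CF}(G)\le 1$.

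The construction would proceed as follows. By Fáry's theorem, I draw $G$ in the plane with straight-line edges and scale the drawing so that all vertex positions are far apart relative to the unit radius. For each vertex $v$ I place a \emph{vertex disk} $\hat v$ at the embedded position of $v$, together with a small constant-size cluster of auxiliary disks whose only efficient dominating sets tie the coloring state of $\hat v$ to membership of $v$ in the prospective solution $S$. For each edge $uv\in E$ I build an \emph{edge gadget}: a chain of unit disks strung along the straight segment from $\hat u$ to $\hat v$, with consecutive disks overlapping and no disk touching any other gadget. The intended correspondence is $S = \{v : \hat v\text{ is colored in the CF 1-coloring of }H\}$.

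The correctness argument rests on the rigidity of CF 1-coloring on paths: an efficient dominating set on a path of length $\ell$ exists only for specific $\ell\bmod 3$ and, when it exists, determines the colored/uncolored status of both endpoints. By tuning the number of disks on each edge chain to the correct residue class mod~$3$, one can guarantee that $\hat u$ and $\hat v$ are forced to match the adjacency semantics of $uv\in E$ in $G$, so that any CF 1-coloring of $H$ projects to one of $G$ and, conversely, any CF 1-coloring of $G$ lifts to one of $H$ by extending each chain canonically.

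The main obstacle is geometric coordination: the Euclidean length of an embedded edge is essentially fixed, but the chain must contain a number of disks in a prescribed residue class mod~$3$ and must remain disjoint from every other gadget. My plan is to route each edge along a slightly bent polyline in a sufficiently dense grid, using small detours to tune chain length mod~$3$ while preserving non-interference; vertices of high degree are handled by spacing the incident chain-start disks evenly around $\hat v$. Once this geometric coordination is established, polynomial-time computability of the construction and the biconditional on CF 1-colorability follow from a routine case analysis of efficient domination on the gadgets.
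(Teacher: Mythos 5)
Your NP-membership argument and the observation that a conflict-free $1$-coloring is exactly an efficient dominating set are correct, and your overall strategy is genuinely different from the paper's: the paper reduces from {\sc Positive Planar 1-in-3-SAT} and builds bespoke variable/clause gadgets out of unit disks, whereas you try to geometrically simulate an \emph{arbitrary} planar instance of CF $1$-coloring by replacing edges with unit-disk chains. The chain idea itself is sound: the internal vertices of a chain force a coloring that is periodic with period $3$, each end of the chain is in one of three states (``dominates the endpoint'', ``neutral with endpoint uncolored'', ``neutral with endpoint colored''), and choosing the chain length $\equiv 0 \pmod 3$ pairs these states across the two ends exactly as an edge would. (Your statement that efficient dominating sets on paths exist ``only for specific $\ell \bmod 3$'' is not right --- they exist for every length; what the residue actually controls is this coupling of phases.)

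The genuine gap is at the vertices. For the phase analysis of each chain to be valid, the first disks $p_1^{(1)},\dots,p_1^{(d)}$ of the $d$ chains incident to $\hat v$ must all intersect $\hat v$ but be pairwise \emph{non}-intersecting: if $p_1^{(i)}$ and $p_1^{(j)}$ overlap, then $N[p_1^{(j)}]$ picks up an extra potentially colored vertex and the ``exactly one colored in $N[p_1^{(j)}]$'' constraint no longer decouples into per-chain phases (e.g., chain $i$ in its ``first disk colored'' state together with chain $j$ in its ``second disk colored'' state puts two colored vertices into $N[p_1^{(j)}]$). But at most five pairwise disjoint unit disks can all intersect a common unit disk --- their centers would have to lie within distance $2$ of the center of $\hat v$ while being pairwise more than $2$ apart --- so ``spacing the incident chain-start disks evenly around $\hat v$'' fails outright for every vertex of degree at least six. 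To rescue the reduction you would need the source problem to be NP-hard on planar graphs of maximum degree at most five (which you neither claim nor verify for Theorem~4.1 of Abel et al.), or a separate high-degree vertex gadget, which you do not supply. A secondary problem is the unspecified ``auxiliary cluster'' at each vertex: any nonempty cluster attached to $\hat v$ must itself be efficiently dominated, which forces colored vertices into $N[\hat v]$ and corrupts the count $|N[\hat v]\cap S|$ that your correspondence relies on; the construction is cleaner (and only consistent) with no cluster at all. Until the degree issue is resolved, the reduction as described does not go through; the paper sidesteps it entirely by working with gadgets of degree at most three from the start.
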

\begin{proof}
Membership in NP is straightforward.
We prove NP-hardness by reduction from {\sc Positive Planar 1-in-3-SAT}, see Mulzer and Rote~\cite{mr-mwtnph-08}.
Given a Boolean formula $\phi$ in 3-CNF with only positive literals and planar clause-variable incidence graph, we construct a unit disk intersection graph $G(\phi)$ that has a conflict-free $1$-coloring iff $\phi$ is 1-in-3-satisfiable.
Let $\phi$ consist of variables $\{x_1,\ldots,x_n\}$ and clauses $\{c_1,\ldots,c_k\}$.
In $G(\phi)$, variables $x_i$ are represented by a cycle of length $12k$ and clauses are represented by a clause gadget; see Figure~\ref{fig:npc1-clause-and-variable}.
\begin{figure}
	\begin{center}
		\resizebox{.85\linewidth}{!}{\includegraphics{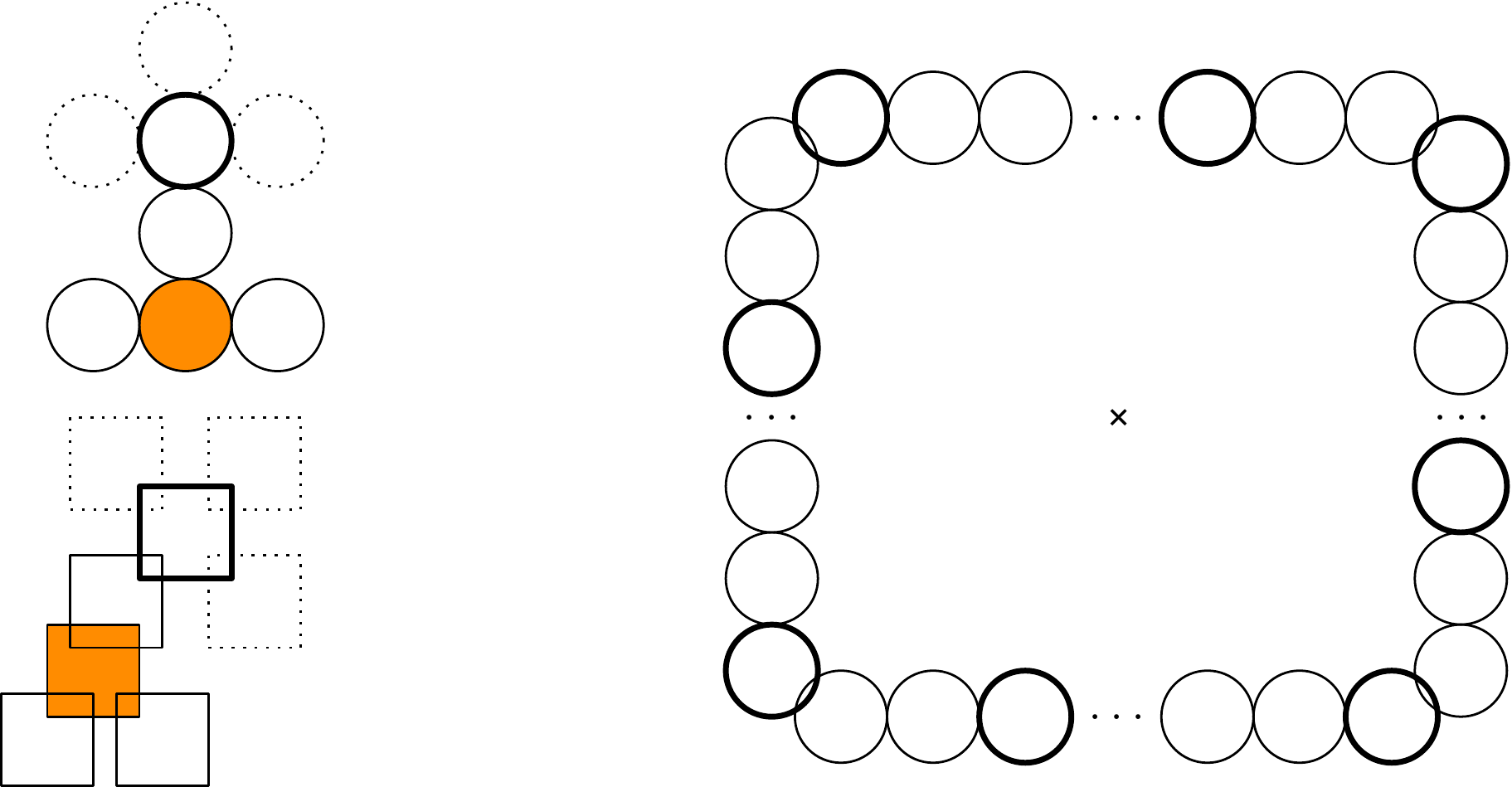}}
	\end{center}
	\caption{
		{\bf (Left)}
			Clause gadget represented using unit disks and unit squares.
			The \emph{clause vertices} that are attached to the remainder of $G(\phi)$ are drawn with bold outline.
			Dashed objects depict where the connections to the variables attach to the clause vertex.
			Orange vertices must be colored in any conflict-free $1$-coloring; therefore, the clause vertex must remain uncolored.
		{\bf (Right)}
		A variable gadget. \emph{True vertices}, i.e., vertices that are colored if the variable is set to {\em true} are drawn with bold outline.
		In a conflict-free coloring of a variable gadget, every third vertex along the cycle must be colored.
		This implies that we must color either all {\em true} vertices or none of them.
	}
	\label{fig:npc1-clause-and-variable}
\end{figure}
The clause gadgets have distinguished \emph{clause vertices}.
Every third vertex of a variable gadget is a \emph{true vertex}; these are the vertices that are colored if the corresponding variable is set to {\em true}.
We connect the clause vertex of each clause to a {\em true} vertex of each variable occurring in the clause using paths of length divisible by 3.
There are sufficiently many {\em true} vertices in each variable gadget to avoid having to use a {\em true} vertex for more than one clause.

A clause vertex $c$ cannot be colored and cannot have a conflict-free neighbor in the clause gadget.
Its neighbors $d_1,d_2,d_3$ along the paths connecting $c$ to variables require conflict-free neighbors themselves.
Therefore either $d_i$ must be colored itself or its predecessor $p_i$ on the path must be colored.
Along the path to the variable, every third vertex must be colored, starting with either $d_i$ or $p_i$.
Thus it is only possible to color $d_i$ if the {\em true} vertices of the corresponding variable are colored; $p_i$ can only be colored if the {\em true} vertices of the corresponding variable are not colored, see Figure~\ref{fig:variable-clause-connection}.

\begin{figure}
	\begin{center}
		\resizebox{.4\linewidth}{!}{\includegraphics{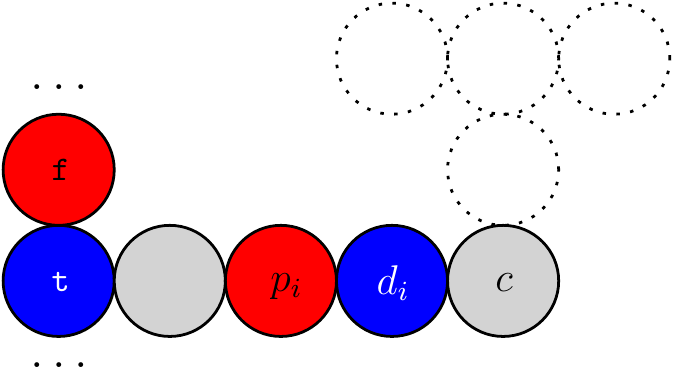}}
	\end{center}
	\caption{
		Path connecting a variable to a clause vertex $c$ (with dashed clause gadget).
		The vertices marked {\tt t} and {\tt f} are {\em true} and {\em false} vertices of a variable. Blue vertices are colored if the variable is set to {\em true}, red vertices are colored if the variable is set to {\em false}.
		Gray vertices cannot be colored.
		For $c$, this is enforced directly by the clause gadget; for the other vertices along the path, it follows from the fact that the clause vertex cannot be colored.
	}
	\label{fig:variable-clause-connection}
\end{figure}

In a conflict-free $1$-coloring $\chi$ of $G(\phi)$, each clause vertex has a conflict-free neighbor.
This implies that for each clause $c_j$, exactly one of the vertices $d_1,d_2,d_3$ is colored, which in turn implies that exactly one variable gadget connected to $c_j$ has all its {\em true} vertices colored.
Therefore setting the variables that have all their {\em true} vertices colored to {\em true} yields a 1-in-3-satisfying assignment for $\phi$.
Analogously, a 1-in-3-satisfying assignment for $\phi$ induces a conflict-free $1$-coloring of $G(\phi)$.

We still have to argue that $G(\phi)$ is a unit disk graph that can be constructed in polynomial time.
We consider a straight-line embedding of $\phi$'s clause-variable incidence graph with vertices placed on an $\mathcal{O}(n+k) \times \mathcal{O}(n+k)$ integer grid; such an embedding can be obtained in polynomial time based on the graph-drawing techniques of Fraysseix, Pach and Pollack~\cite{fpp-hdpgg-90}.
We enlarge this embedding by an appropriate polynomial factor to ensure that clauses and variables are far enough from each other and edges do not come too close to gadgets they are not incident to.
We use the embedding of the variables and clauses to place the center of variables gadgets (marked by a cross in Figure~\ref{fig:npc1-clause-and-variable}) and clause vertices accordingly.
The edges between variables and clauses can then be replaced by paths; we can ensure that their length is divisible by 3 by simple local modifications (shifting a constant number of disks closer to each other and inserting a constant number of new disks).
When deciding which {\em true} vertex to use for a clause and where to place the disk adjacent to the clause vertex, we preserve the order of edges around the clause and variables vertices of the embedding.
If multiple possible {\em true} vertices exist, we use a closest one.
This excludes intersections between different paths close to variables and unintended intersections between paths and variable gadgets.
We may have to bend the paths around the clause gadgets; however, for every clauses this involves only a constant number of disks.
\end{proof}

\subsection{A Worst-Case Upper Bound: Six Colors}
\label{subsec:six}

On the positive side, we show that the conflict-free chromatic number of unit disk graphs is bounded by six.
We do not believe this result to be tight.
In particular, we conjecture that the number is bounded by four; in fact, we do not even know an example for which two colors are insufficient.
One of the major obstacles towards obtaining tighter bounds is the fact that a simple graph-theoretic characterization of unit disk graphs is not available, as recognizing unit disk graphs is complete for the existential theory of the reals \cite{km-sdprg-12}.
This makes it hard to find unit disk graphs with high conflict-free chromatic number, especially considering the size that such a graph would require:
The smallest graph with conflict-free chromatic number three we know has 30 vertices, and by enumerating all graphs on 12 vertices one can show that at least 13 vertices are necessary, even without the restriction to unit disk graphs.

One approach to conflict-free coloring of unit disk graphs is by subdividing the plane into strips, coloring each strip independently.
We conjecture the following.

\begin{conjecture}
\label{con:s2-cfc2}
Unit disk graphs of height $2$ are conflict-free $2$-colorable.
\end{conjecture}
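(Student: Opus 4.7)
The plan is to exploit the quasi-one-dimensional structure of a height-$2$ strip by mimicking the tight argument that yields $\chi_{CF} \leq 2$ for interval graphs. First I would sort the disk centers by x-coordinate and use the basic fact that two unit disks whose centers differ in x-coordinate by more than $2$ cannot intersect, so that, although the strip permits some vertical stacking, the overall adjacency pattern is still controlled by the linear order of x-coordinates. A sweep-line style construction should then carry through.

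Concretely, I would greedily build a backbone path $b_1, b_2, \ldots, b_m$: let $b_1$ be the leftmost disk and, given $b_i$, let $b_{i+1}$ be the disk of largest x-coordinate that is adjacent to $b_i$ and has x-coordinate strictly larger than that of $b_i$. Color $b_i$ with color $1$ if $i$ is odd and with color $2$ if $i$ is even, and leave every non-backbone disk uncolored. When the strip collapses to height $0$ this scheme reduces to the standard chain construction that certifies $\chi_{CF} \leq 2$ for interval graphs, so it is the natural generalization.

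Verifying conflict-freeness comes down to two geometric claims. First, a non-backbone disk $d$ should intersect only a short consecutive window of backbone disks $b_i, \ldots, b_{i+t}$; the greedy ``jump as far right as possible'' rule together with the height-$2$ constraint should force $t \leq 2$, so that either both colored neighbors of $d$ are unique in $N[d]$ or the middle one is. Second, a backbone disk $b_i$ sees its backbone neighbors $b_{i-1}$ and $b_{i+1}$ of opposite colors, and the maximality of the greedy jump should imply that $b_i$ is not adjacent to $b_{i\pm 2}$, so its opposite-color neighbor remains uniquely colored in $N[b_i]$.

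The main obstacle is the sharpness of the height bound. When disks span nearly the full height of the strip the backbone may zig-zag between the top and bottom boundary, a non-backbone disk can in principle meet four or more consecutive backbone disks, and $b_i$ and $b_{i+2}$ may end up at Euclidean distance exactly~$2$. The hard part will be ruling out these tight configurations, or locally repairing them: a promising route is to maintain two interleaved backbones, one close to each boundary of the strip, and combine them with a local recoloring lemma showing that every tight configuration can be patched without breaking the invariant elsewhere. I expect this last geometric case analysis to be where the real work concentrates, and it is presumably why the statement currently sits as a conjecture rather than a theorem.
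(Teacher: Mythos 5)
You should first be aware that the statement you are attacking is posed in the paper only as Conjecture~\ref{con:s2-cfc2}; the paper does not prove it, and instead proves the weaker Theorem~\ref{thm:strips-height-s3} for strips of height $\sqrt{3}$, precisely because the geometric fact underlying the sweep argument breaks between height $\sqrt{3}$ and height $2$. Your proposal is a sketch whose hard steps are deferred, and the deferred steps are exactly the ones that fail. The most basic gap is coverage: your backbone need not dominate the graph. Because $b_{i+1}$ is required to be \emph{adjacent} to $b_i$, a disk can be skipped entirely. Take $b_1=(0,0)$, a candidate $b_2=(2-\varepsilon,0)$, a helper disk $e=(0.5,1)$ adjacent to $b_1$, and $d=(1,2)$ adjacent only to $e$; since $d$ is at distance $\sqrt{5}>2$ from both $b_1$ and $b_2$, and the greedy rule prefers $b_2$ over $e$ (larger $x$-coordinate), the vertex $d$ ends up with no colored vertex in $N[d]$ at all. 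The paper's greedy for height $\sqrt{3}$ avoids this by being coverage-driven (choose the lexicographically largest point such that everything smaller is already covered) rather than adjacency-driven, and its correctness rests on the fact that at height $\sqrt{3}$ every point within horizontal distance $1$ of a colored point is adjacent to it ($\sqrt{1+3}=2$); at height $2$ this is exactly what is lost ($\sqrt{1+4}>2$).

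Second, your key claim that a non-backbone disk meets at most three consecutive backbone disks ($t\le 2$) is not forced by the height-$2$ constraint. When the backbone zig-zags between the top and bottom of the strip, consecutive backbone centers can differ in $x$-coordinate by an arbitrarily small amount, so a disk centered mid-strip can intersect four or more consecutive backbone disks and see two occurrences of each color in its closed neighborhood. You acknowledge this failure mode, and the proposed remedy (``two interleaved backbones'' plus a ``local recoloring lemma'') is stated but not carried out; that missing case analysis is precisely the open content of the conjecture, not a detail. The paper's discussion of unit disk graphs in a $2\times 2$ square reinforces the difficulty: it exhibits a graph satisfying the necessary structural conditions in which every conflict-free $2$-coloring must color strictly more vertices than a minimum dominating set, which indicates that any backbone/dominating-set-based sweep of the kind you propose will likely be insufficient for height $2$, and that a genuinely new idea is needed.
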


If this conjecture holds, every unit disk graph is conflict-free 4-colorable.
In this case, one can subdivide the plane into strips of height 2, and then color the subgraphs in all even strips using colors $\{1,2\}$ and the subgraphs in odd strips using colors $\{3,4\}$.
Instead of Conjecture~\ref{con:s2-cfc2}, we prove the following weaker result.

\begin{theorem}
Unit disk graphs $G$ of height $\sqrt{3}$ are conflict-free 2-colorable.
\label{thm:strips-height-s3}
\end{theorem}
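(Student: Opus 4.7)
The plan is a left-to-right sweep combined with induction on $|V(G)|$. Sort the disk centers by x-coordinate as $v_1, \ldots, v_n$, breaking ties arbitrarily. The crucial geometric fact for a strip of height $\sqrt{3}$ is this: any two centers whose x-coordinates differ by at most $1$ are within Euclidean distance $\sqrt{1^2 + (\sqrt{3})^2} = 2$, so the corresponding disks intersect. Consequently, the vertices in any vertical slab of x-width at most $1$ induce a clique of $G$. This ``near-clique'' property is what one should expect to exploit throughout the argument.

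The inductive step would focus on the leftmost vertex $v := v_1$. Let $u$ denote the rightmost vertex of $N[v]$ (possibly $u = v$ if $v$ is isolated). I would distinguish two cases. If $x_u - x_v \leq 1$, then $N[v]$ lies inside a slab of x-width at most $1$ and hence forms a clique; coloring a single well-chosen vertex of $N[v]$ serves $v$ and in fact every vertex whose only neighbors lie in this clique. If $x_u - x_v > 1$, then $N[v]$ splits into a near-part $A = N[v] \cap [x_v, x_v+1]$ (a clique containing $v$) and a far-part $B = N[v] \setminus A$ with x-coordinates in $(x_v+1, x_v+2]$. In either case, I would designate one or two specific vertices of $N[v]$ to receive one of the two colors in a way that guarantees $v$ a uniquely colored neighbor, then invoke the induction hypothesis on a smaller graph obtained by deleting $v$ (and perhaps contracting or freezing its near-neighborhood).

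The main obstacle is that a naive induction on $G - v$ does not control whether the recursive coloring later assigns the ``serving'' color to another vertex of $N[v]$, which would destroy $v$'s uniqueness. To overcome this, the right inductive claim should be strengthened: instead of just asserting the existence of a conflict-free $2$-coloring, it should assert existence under boundary constraints such as ``the following vertices near the left frontier are forbidden from receiving color $c$,'' or ``the leftmost vertex may be required to be uncolored.'' Propagating such constraints through the sweep is where I expect the bulk of the technical work to lie. The $\sqrt{3}$ height is what makes this feasible: because the left frontier at any stage interacts only through a single width-$1$ clique, only a bounded, local set of constraints has to be carried across each inductive step, and two colors give enough flexibility to absorb them.
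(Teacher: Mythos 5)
Your geometric observation is exactly the one the paper relies on: in a strip of height $\sqrt{3}$, any two centers whose $x$-coordinates differ by at most $1$ are within distance $2$, so every colored point dominates a vertical slab of width $2$ centered on it. Your overall framework (a left-to-right sweep choosing a sparse set of colored vertices) also matches the paper's. However, as written your argument has a genuine gap: the entire difficulty of the theorem --- guaranteeing that no vertex ends up seeing its ``serving'' color twice once later vertices are colored --- is acknowledged but deferred (``propagating such constraints through the sweep is where I expect the bulk of the technical work to lie''). The strengthened inductive hypothesis with left-frontier constraints is never formulated or verified, so no complete proof is given.

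The missing idea is that no constraint propagation is needed at all if you fix the colors up front by \emph{alternation}. The paper's algorithm repeatedly picks the lexicographically maximal point $c$ such that every point before $c$ is covered by the already-colored set together with $c$, and assigns colors $1,2,1,2,\dots$ in that order. Because a new point is only introduced when some intermediate point would otherwise be uncovered, consecutive colored points are more than $1$ apart in $x$; hence two points of the same color are more than $2$ apart in $x$ and can never both lie in the closed neighborhood of a colored vertex or (after a short case analysis) of an uncolored vertex that lacks a neighbor of the other color. The one remaining configuration --- an uncolored vertex adjacent to two points of each color --- is ruled out by the maximality of the greedy choice, not by any inductive constraint. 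So your plan is pointed in the right direction, but the alternation-plus-spacing argument is the substance of the proof, and it is precisely the part your proposal leaves open.
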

\begin{proof}
Given a realization of $G$ consisting of unit disks with center points with $y$-coordinate in $[0,\sqrt{3}]$, we compute a conflict-free 2-coloring of $G$ using the following simple greedy approach.
In an order corresponding to the lexicographic order of the points in $\mathbb{R}^2$ (denoted by $\leq$), we build a set $C$ of colored vertices to which we alternatingly assign colors 1 and 2.
In each step, we add a new point to $C$ until all points are covered, i.e., they are either colored or have a colored neighbor.
In order to select the next colored point, we find the lexicographically maximal point $c$ such that every point $c' < c$ is already colored or has a colored neighbor in $C \cup \{c\}$.
We observe that this point $c$ may already have a colored neighbor, but then there must be an uncovered point between $c$ and previously colored point.

In this procedure, every point $v$ is assigned a colored neighbor $w \in N[v]$.
It remains to exclude the following three cases.
(1) A colored point $v$ is adjacent to another point $w$ of the same color,
(2) an uncolored point is adjacent to two or more points of one color and none of the other color,
(3) an uncolored point is adjacent to two or more points of both colors.

To this end, we use the following observation.
Each colored point $c$ induces a closed vertical \emph{strip} of width 2 centered around $c$.
As shown in Figure~\ref{fig:strips_height_s3_substrips}, every point $v$ in this strip is adjacent to $c$.
\begin{figure}
	\begin{center}
		\resizebox{.95\linewidth}{!}{\includegraphics{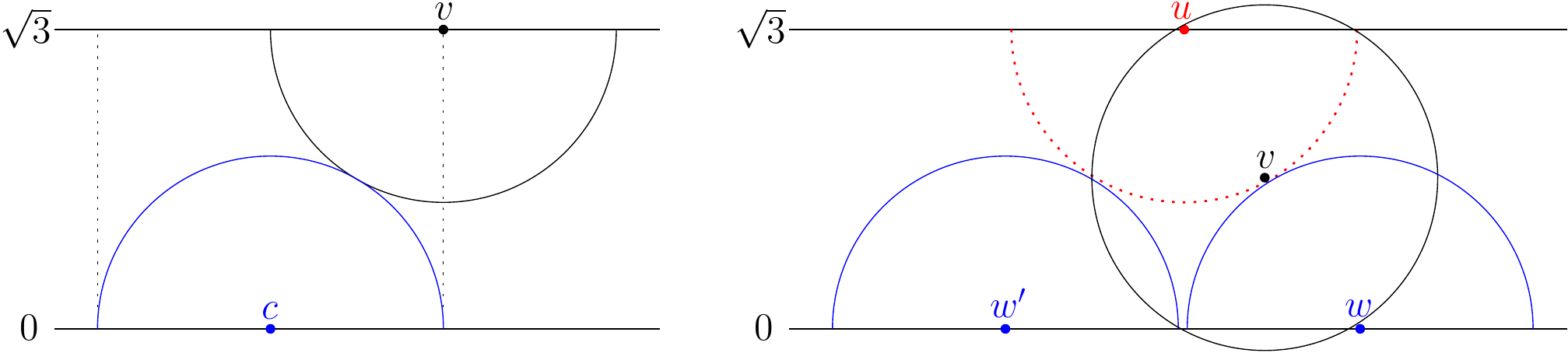}}
	\end{center}
	\caption{{\bf (Left)} Every colored point $c$ induces a vertical strip of width 2 (dashed lines); all points $v$ within this strip are adjacent to $c$. {\bf (Right)} The configuration in case (2); there must be a point $u$ of color 2 adjacent to $v$.}
	\label{fig:strips_height_s3_substrips}
\end{figure}
Thus, the horizontal distance between two colored points must be greater than 1.
For case (1), assume there was a point $v$ of color 1 adjacent to a point $w > v$ of color 1.
This cannot occur, because between $v$ and $w$, there must be a point $x$ of
color 2; therefore, the horizontal distance between $v$ and $w$ must be greater
than 2, a contradiction.

Regarding case (2), assume there was an uncolored point $v$ adjacent to two points $w' < v < w$ of color 1; see Figure~\ref{fig:strips_height_s3_substrips}.
Between points $w'$ and $w$, there must be a point $u$ of color 2, and $v$ must not be adjacent to $u$.
There are two possible orderings: $w' < v < u < w$ and $w' < u < v < w$.
W.l.o.g., let $u < v$; the other case is symmetric.
In this situation, the $x$-coordinates of the points have to satisfy
$x(u) < x(v)-1$, $x(w') < x(u)-1$, and thus $x(w') < x(v)-2$ in contradiction to the assumption that $v$ and $w'$ are adjacent.

\begin{figure}
	\begin{center}
		\resizebox{.7\linewidth}{!}{\includegraphics{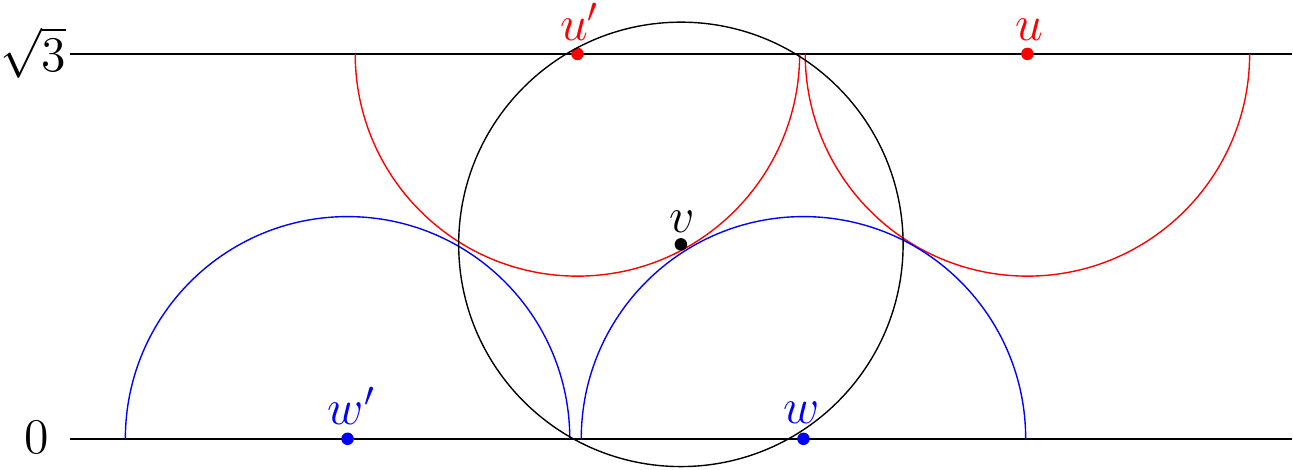}}
	\end{center}
	\caption{The configuration in case (3); the algorithm would have chosen $v$ instead of $u'$.}
	\label{fig:strips_s3_case3}
\end{figure}

Regarding case (3), assume there was an uncolored point $v$ adjacent to two points $w' < v < w$ of color 1 and two points $u' < v < u$ of color 2.
W.l.o.g., assume $w' < u' < v < w < u$ as depicted in Figure~\ref{fig:strips_s3_case3}; the case $u' < w'$ is symmetric.
Because $w'$ and $v$ are adjacent, the vertical strip induced by $v$ intersects the strip induced by $w'$.
Thus, there cannot be a point $y$ with $w' < y < v$ not adjacent to $w'$ or $v$.
This is a contradiction to the choice of $u'$: The algorithm would have chosen $v$, or a larger point, instead of $u'$.
\end{proof}
The following corollary follows by subdividing the plane into strips of height $\sqrt{3}$.

\begin{corollary}
Unit disk graphs are conflict-free $6$-colorable.
\end{corollary}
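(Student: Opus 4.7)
The plan is to tile the plane with horizontal strips of height $\sqrt{3}$, apply Theorem~\ref{thm:strips-height-s3} inside each strip, and then combine the per-strip $2$-colorings using three disjoint palettes, cycling with period $3$.

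First, I would partition the centers into strips $S_i := \{v : y(v) \in [i\sqrt{3},(i+1)\sqrt{3})\}$ for $i \in \mathbb{Z}$. Each induced subgraph $G[S_i]$ is a unit-disk graph of height $\sqrt{3}$, so Theorem~\ref{thm:strips-height-s3} produces a conflict-free $2$-coloring of $G[S_i]$. I would then assign palettes according to $i \bmod 3$: strips with $i \equiv 0 \pmod 3$ use colors $\{1,2\}$, strips with $i \equiv 1 \pmod 3$ use $\{3,4\}$, and strips with $i \equiv 2 \pmod 3$ use $\{5,6\}$.

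The only geometric fact to verify is that two unit disks whose centers lie in strips $S_i$ and $S_j$ with $|i-j| \geq 3$ cannot intersect, because the vertical distance of their centers is at least $2\sqrt{3} > 2$. Therefore every vertex $v \in S_i$ has neighbors only in $S_{i-2}, S_{i-1}, S_i, S_{i+1}, S_{i+2}$, and among these strips only $S_i$ itself uses the palette assigned to $v$'s coloring. In particular, the in-strip conflict-free neighbor $w \in N[v]\cap S_i$ guaranteed by Theorem~\ref{thm:strips-height-s3} retains its uniqueness in the full graph: its color does not occur at any neighbor of $v$ outside $S_i$, and in-strip uniqueness is already in place.

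Hence $G$ admits a conflict-free $6$-coloring. There is no real obstacle here; the entire argument reduces to the numerical inequality $2\sqrt{3} > 2$, which is precisely what makes a period-$3$ palette cycling possible and thereby lifts Theorem~\ref{thm:strips-height-s3} into the desired global bound.
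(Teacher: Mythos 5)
Your proof is correct and is exactly the paper's intended argument: the paper states only that the corollary ``follows by subdividing the plane into strips of height $\sqrt{3}$,'' and your period-$3$ palette cycling, justified by $\sqrt{3}<2<2\sqrt{3}$, is precisely the detail that turns the per-strip $2$-coloring of Theorem~\ref{thm:strips-height-s3} into a global $6$-coloring (and explains why $6$ rather than $4$ colors result here, in contrast to the height-$2$ strips of Conjecture~\ref{con:s2-cfc2}).
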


Unfortunately, the proof of Theorem~\ref{thm:strips-height-s3} does not appear to have a straightforward generalization to strips of larger height.
Further reducing the height to find strips that are colorable with one color is also impossible, see Section~\ref{sec:interval-graphs}.

\subsection{Unit-Disk Graphs of Bounded Area}
Proving Conjecture~\ref{con:s2-cfc2} is non-trivial, even when all center points lie in a $2 \times 2$-square.
In this setting, a circle packing argument can be used to establish the sufficiency of three colors.
If a unit disk graph with conflict-free chromatic number 3 can be embedded into a $2 \times 2$-square, the following are necessary.
	(1) Every minimum dominating set $D$ has size 3, and every pair of dominating vertices must have a common neighbor not shared with the third dominating vertex.
		Thus, every minimum dominating set lies on a 6-cycle without chords connecting a vertex with the opposite vertex.
	(2) $G$ has diameter 2; otherwise, one could assign the same color to two vertices at distance 3.

Using the domination number, one can further restrict the position of the points in the $2 \times 2$-square:
There is an area in the center of the square, depicted in Figure~\ref{fig:conjecture_squares}, that cannot contain the center of any disk because this would yield a dominating set of size 2.
\begin{figure}
	\hfill
	\begin{minipage}{.31\linewidth}
		\resizebox{\textwidth}{!}{\includegraphics{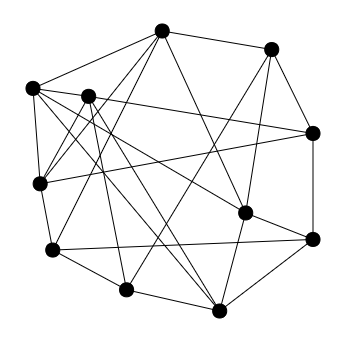}}
	\end{minipage}
	\hfill
	\begin{minipage}{.38\linewidth}
		\resizebox{\textwidth}{!}{\includegraphics{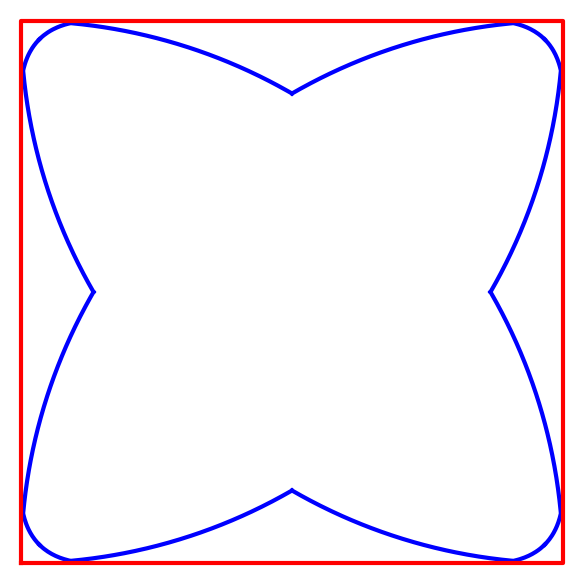}}
	\end{minipage}
	\hfill
	\ \\
	\caption{{\bf (Left)} A vertex-minimal graph satisfying (1) and (2).
	{\bf (Right)} In any unit disk graph $G$ embeddable in a $2 \times 2$-square with domination number $\gamma(G)=3$, no points lie in the depicted area.}
	\label{fig:conjecture_squares}
\end{figure}

The smallest graph satisfying constraints (1) and (2) has 11 vertices and is depicted in Figure~\ref{fig:conjecture_squares}.
It is not a unit disk graph and it is still conflict-free 2-colorable, but every coloring requires at least four colored vertices, proving that coloring a minimum dominating set can be insufficient.
This implies that a simple algorithm like the one used in the proof of Theorem~\ref{thm:strips-height-s3} will most likely be insufficient for strips of greater height.
We are not aware of any unit disk graph satisfying these constraints.

\section{Unit-Square and Interval Graphs}
The constructions of the previous section can also be applied to the case of squares; for interval graphs, we get a tight worst-case bound.

\subsection{Complexity: One Color}
It is straightforward to see that the construction of Theorem~\ref{thm:npc1} can be applied for unit square instead of unit disks.

\begin{corollary}
It is NP-complete to decide whether a unit square graph $G=(V,E)$ has a conflict-free coloring with one color.
\label{thm:npc_sq1}
\end{corollary}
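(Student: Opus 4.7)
The plan is to mirror the proof of Theorem~\ref{thm:npc1} verbatim, substituting unit squares for unit disks throughout, and to verify that each geometric component transfers. Membership in NP is immediate: a certificate consists of the (partial) coloring, and conflict-freeness is checkable in polynomial time. For hardness, I would again reduce from \textsc{Positive Planar 1-in-3-SAT}, producing for every formula $\phi$ a unit-square graph $G(\phi)$ that is conflict-free $1$-colorable if and only if $\phi$ is 1-in-3-satisfiable.

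The combinatorial analysis in the proof of Theorem~\ref{thm:npc1}---that clause vertices cannot be colored and therefore exactly one of their three path-neighbors must be colored, and that the length-divisible-by-3 paths propagate the "true/false" status of a variable to the clause---makes no use of the disk representation at all and carries over unchanged. What needs to be verified is only that each of the three building blocks admits a unit-square realization. The clause gadget is already exhibited in this form in Figure~\ref{fig:npc1-clause-and-variable} (the right-hand representation). A variable gadget is a cycle of length $12k$; such a cycle is realized by placing $12k$ unit squares with centers on a closed polygonal curve so that consecutive squares overlap slightly and no others do. Likewise, the connecting paths are realized as chains of unit squares whose length can be fine-tuned, by small local shifts, to be divisible by $3$.

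For the global embedding, I would reuse the Fraysseix--Pach--Pollack planar straight-line grid embedding of the clause-variable incidence graph, scale it up by an appropriate polynomial factor so that all gadgets are well-separated, and then route each edge by a chain of unit squares from a \emph{true} vertex of the variable to the clause vertex. Preserving the cyclic edge order around each variable and always using a closest available \emph{true} vertex prevents unintended crossings, and the constant number of local detours needed around each clause gadget can each be achieved with a constant number of unit squares. The construction runs in polynomial time, so conflict-free $1$-coloring of unit-square graphs is NP-complete.

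The only point that needs a moment's thought---and the closest thing to an obstacle---is ensuring that all incidental intersections remain under control: that the thickened routes of two different paths do not cross one another and that they do not accidentally touch a gadget they are not incident to. As in the unit-disk case, the polynomial scaling of the embedding provides enough room for each route to stay within a narrow corridor around a Jordan arc in the planar embedding, so no unwanted intersections occur. Hence the entire argument of Theorem~\ref{thm:npc1} goes through, proving the corollary.
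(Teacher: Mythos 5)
Your proposal is correct and matches the paper's approach exactly: the paper likewise derives the corollary by observing that the construction of Theorem~\ref{thm:npc1} carries over to unit squares (the clause gadget is even drawn with both disks and squares in Figure~\ref{fig:npc1-clause-and-variable}), and your verification of the gadget realizations and the global embedding fills in the same details the paper leaves as straightforward.
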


\subsection{A Worst-Case Upper Bound: Four Colors}
The proof of Theorem~\ref{thm:strips-height-s3} can be applied to unit square graphs of height 2 instead of unit disk graphs of height $\sqrt{3}$; see Figure~\ref{fig:squaregraph_strip}.

\begin{corollary}
	Unit square graphs of height $2$ are conflict-free $2$-colorable.
	Unit square graphs are conflict-free $4$-colorable.
\end{corollary}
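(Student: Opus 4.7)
The plan is to adapt the proof of Theorem~\ref{thm:strips-height-s3} directly, substituting unit squares of side $1$ for unit disks of radius $1$ and height $2$ for height $\sqrt{3}$. The two metrics play parallel roles: for disks of radius $1$ in a height-$\sqrt{3}$ strip, the crucial property is that a colored center $c$ is adjacent to every point of the strip lying within horizontal distance $1$ of $c$; the analogous property for unit squares of side $1$ in a height-$2$ strip is that $c$ is adjacent to every center in the axis-aligned box $[x_c - 1, x_c + 1] \times [y_c - 1, y_c + 1]$, which contains all points with both horizontal and vertical distance at most $1$ from $c$.

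For the first statement, I would rerun the greedy procedure verbatim: sort centers lexicographically, maintain a set $C$ of colored points with colors alternating between $1$ and $2$, and at each step add the lex-maximal point $c$ such that every $c' < c$ is covered by $C \cup \{c\}$. The three cases of the proof of Theorem~\ref{thm:strips-height-s3} (a colored vertex adjacent to another of the same color; an uncolored vertex adjacent to two or more of one color and none of the other; an uncolored vertex adjacent to two or more of both colors) should be refuted by the same horizontal-distance arguments, with $L^\infty$ adjacency now playing the role of $L^2$ adjacency. I expect the main subtlety is that, in cases (2) and (3), the vertical constraint $|\Delta y| \leq 1$ between adjacent squares replaces the implicit $|\Delta y| \leq \sqrt{3}$ used for disks; this appears harmless, but it is the only place where the underlying geometry genuinely differs and thus deserves explicit verification.

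For the second statement, I would partition the plane into horizontal strips $S_k = \mathbb{R} \times [2k, 2k+2)$ for $k \in \mathbb{Z}$. By the first statement, each induced subgraph $G[S_k]$ admits a conflict-free $2$-coloring. Assign palette $\{1, 2\}$ to even $k$ and $\{3, 4\}$ to odd $k$. Two unit squares whose centers differ in $y$-coordinate by more than $1$ are non-adjacent, so every neighbor of a center $v \in S_k$ lies in $S_{k-1} \cup S_k \cup S_{k+1}$; the two neighboring strips then use a palette disjoint from that of $S_k$. Consequently, the conflict-free neighbor of $v$ furnished by the height-$2$ coloring of $S_k$ retains its unique color in the full closed neighborhood $N[v]$, and the combined coloring is conflict-free.
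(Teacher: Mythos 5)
Your overall strategy is the paper's: rerun the greedy sweep of Theorem~\ref{thm:strips-height-s3} inside a horizontal strip, then tile the plane with strips carrying disjoint palettes. The second half (deducing $4$-colorability from the strip result, using that neighbors lie in adjacent strips) is fine. The problem sits in the first half, at exactly the point you flag and then dismiss as ``harmless.'' The whole case analysis of Theorem~\ref{thm:strips-height-s3} rests on the property that a colored center $c$ is adjacent to \emph{every} center of the strip within the critical horizontal distance; for disks of radius $1$ this holds in a strip of height $\sqrt{3}$ because $|\Delta x|\le 1$ and $|\Delta y|\le\sqrt{3}$ force Euclidean distance at most $2$. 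For squares of side $1$ in a strip of height $2$ the analogue is false: two centers with $\Delta x=0$ and $\Delta y=1.5$ are non-adjacent. Consequently the derived invariant ``consecutive colored points have horizontal distance greater than $1$'' fails, and with it cases (1)--(3). This is not cosmetic. Take the five centers $(0,0)$, $(0.3,1.7)$, $(0.5,0.5)$, $(1.4,0.6)$, $(1.45,0.55)$, all with $y\in[0,2]$. The greedy sweep colors $(0,0)$ with color $1$; then $(0.3,1.7)$, which is covered by nothing earlier, with color $2$; then $(1.45,0.55)$ with color $1$. The uncolored vertex $(0.5,0.5)$ is adjacent to both color-$1$ vertices, to no color-$2$ vertex, and is itself uncolored, so it has no conflict-free neighbor. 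Your adaptation therefore does not establish the first statement as you set it up.

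The repair is a matter of scale, and it is what the paper actually intends (see Figure~\ref{fig:squaregraph_strip}): the strip height must not exceed the side length of the squares, so that adjacency inside the strip is governed by the $x$-coordinate alone and the full-height vertical strip of the appropriate width around each colored center is entirely adjacent to it. The figure's width-$4$ strip corresponds to squares of $L^\infty$-``radius'' $1$ (side $2$), for which height $2$ works; under your side-$1$ convention you must instead cut the plane into strips of height $1$. Either way the second statement follows by the same parity-of-strips argument (it is scale-invariant), but the height-$2$ claim with side-$1$ squares is precisely the configuration in which the greedy argument breaks, and your proposal needs this correction before it matches the paper's proof.
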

\begin{figure}[h]
	\begin{center}
		\resizebox{.58\linewidth}{!}{\includegraphics{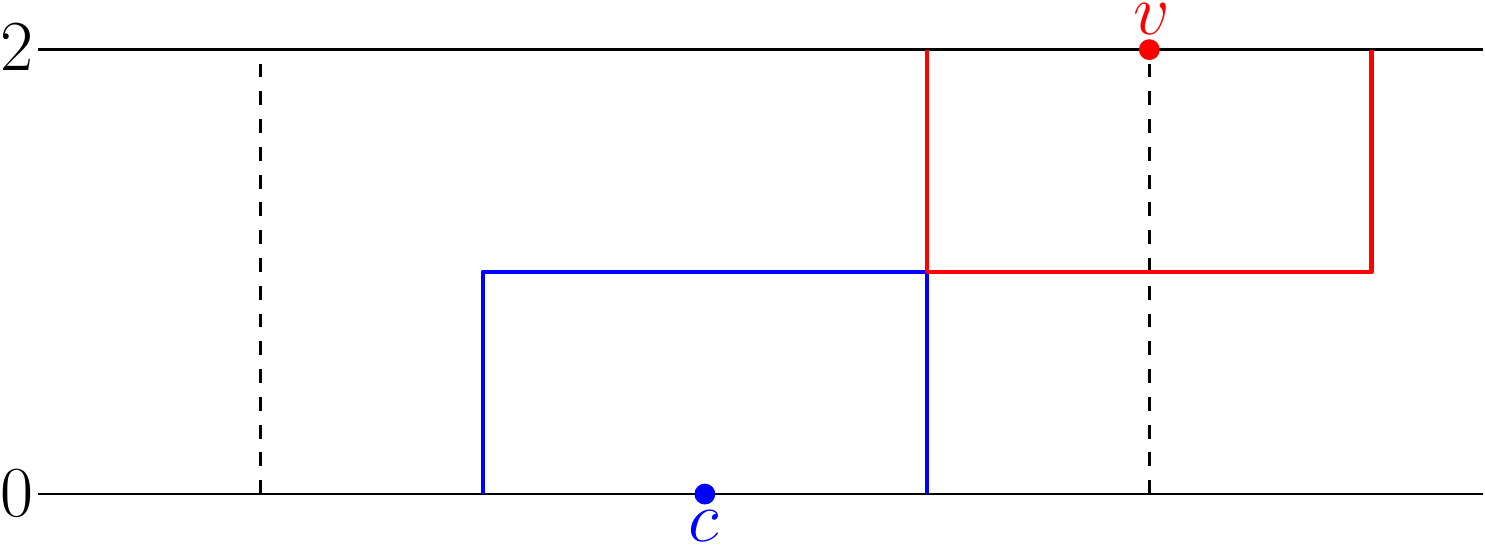}}
	\end{center}
\vspace*{-3mm}
	\caption{
		For unit square graphs of height 2, we have a similar situation to that depicted in Figure~\ref{fig:strips_height_s3_substrips}:
		Centered around each colored vertex $c$, there is a vertical strip of width 4 such that all vertices $v$ with center points in this strip are adjacent to $c$.
		The remainder of the proof of Theorem~\ref{thm:strips-height-s3} applies to unit square graphs analogously.
	}
	\label{fig:squaregraph_strip}
\vspace*{-3mm}
\end{figure}

\subsection{Interval Graphs: Two Colors}
\label{sec:interval-graphs}
Unit interval graphs correspond to unit disk or unit square graphs with all centers lying on a line. 
Even then, two colors in a conflict-free coloring may be required; the Bull Graph is such an example, see Figure~\ref{fig:bull}.

\begin{figure}[h]
	\begin{center}
		\resizebox{0.98\linewidth}{!}{\includegraphics{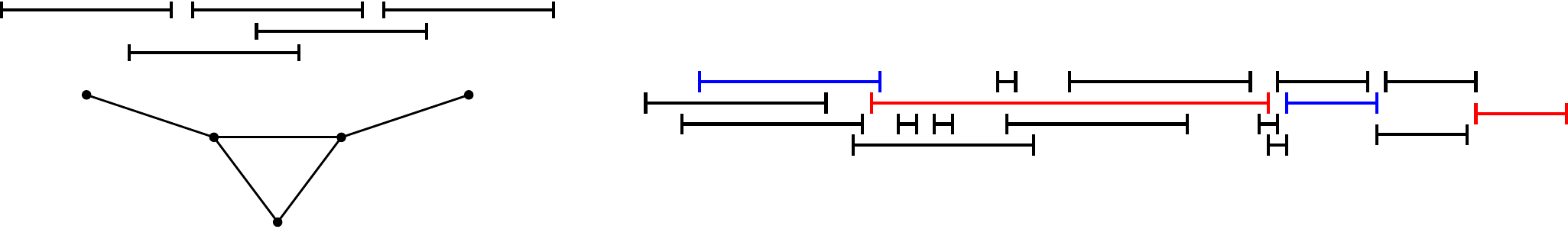}}
	\end{center}
\vspace*{-3mm}
\caption{{\bf (Left)} Realizing the Bull Graph as a unit interval graph. Conflict-free coloring requiring two colors: there is no dominating vertex, the only pair of vertices at distance 3 is no dominating set.
	{\bf (Right)} A conflict-free 2-coloring of an interval graph as computed by the greedy coloring algorithm sketched above.}
	\label{fig:bull}
\vspace*{-3mm}
\end{figure}

In this case, the bound of 2 is tight:
By Theorem~\ref{thm:strips-height-s3}, unit interval graphs are conflict-free $2$-colorable.
By adapting the algorithm used in the proof to always choose the interval extending as far as possible to the right without leaving a previous interval uncovered, this can be extended to interval graphs with non-unit intervals.
For an example of this procedure, refer to Figure~\ref{fig:bull}.

\section{Conclusion}
There are various directions for future work.
In addition to closing the worst-case gap for unit disks (and proving Conjecture~\ref{con:s2-cfc2}), the worst-case conflict-free chromatic number for unit square graphs also remains open.
Other questions include a tight bound for disk (or square) intersection graphs, and a necessary criterion for a family of geometric objects to have intersection graphs with unbounded conflict-free chromatic number.

\bibliography{refs}
\end{document}